\newtheorem{theorem}{Theorem}[section]
\newtheorem{corollary}[theorem]{Corollary}
\newtheorem{lemma}[theorem]{Lemma}
\newtheorem{proposition}[theorem]{Proposition}
\newtheorem{problem}[theorem]{Problem}
\newcommand{\old}[1]{{{}}}
\def\segment#1#2{{\overline{#1#2}}}
\def\wedge#1{{W_{{#1}}}}
\def\halfplane#1{{h_{{#1}}}}
\def\leftray#1{{\rho{\stackrel{_\nwarrow}{_{#1}}}}}
\def\rightray#1{{\rho{\stackrel{_\nearrow}{_{#1}}}}}
\def\leftline#1{{l(\leftray{#1})}}
\def\rightline#1{{l(\rightray{#1})}}
\def\topregion#1{{{R}^{top}_{{#1}}}}
\def\bottomregion#1{{{R}^{bot}_{{#1}}}}
\def\clst#1{{\C_{{{F}}({{#1}})}}}
\def\C{{\cal{C}}}
\def\grid{{\cal{G}}}
\def\graph{{{G}}}
\def\UDG{{U\!DG}}
\def\nf{{n\!f}}
\begin{document}

\title{Symmetric Connectivity with Directional Antennas\thanks{Work by R. Aschner and G. Morgenstern was partially supported by the Lynn and William Frankel Center for Computer Sciences. Work by R. Aschner and M. Katz was partially supported by the Israel Ministry of Industry, Trade and Labor (consortium CORNET). Work by M. Katz was partially supported by grant 1045/10 from the Israel Science Foundation.
}}

\author{Rom Aschner \ \ \ Matthew J. Katz \ \ \ Gila Morgenstern
\\
\\
{\small Department of Computer Science, Ben-Gurion University, Israel} \\
{\small {\tt $\{$romas,matya,gilamor$\}$@cs.bgu.ac.il}}}

\old{
\author{Rom Aschner\thanks{Partially supported by the Lynn and William Frankel Center for Computer Sciences, and by the Israel Ministry of Industry, Trade and Labor (consortium CORNET).} \ \ \ Matthew J. Katz\thanks{Partially supported by grant 1045/10 from the Israel Science Foundation, and by the Israel Ministry of Industry, Trade and Labor (consortium CORNET).} \ \ \ Gila Morgenstern\thanks{Partially supported by the Lynn and William Frankel Center for Computer Sciences.}
\\
{\small Department of Computer Science, Ben-Gurion University, Israel} \\
{\small {\tt $\{$romas,matya,gilamor$\}$@cs.bgu.ac.il}}}
}
\old{
\author{Rom Aschner\thanks{Dept. of Computer Science, Ben-Gurion University, Beer-Sheva 84105, Israel, {\tt romas@cs.bgu.ac.il}. Partially supported by the Lynn and William Frankel Center for Computer Sciences, and by the Israel Ministry of Industry, Trade and Labor (consortium CORNET).} \and Matthew J. Katz\thanks{Dept. of Computer Science, Ben-Gurion University, Beer-Sheva 84105, Israel, {\tt matya@cs.bgu.ac.il}. Partially supported by grant 1045/10 from the Israel Science Foundation, and by the Israel Ministry of Industry, Trade and Labor (consortium CORNET).} \and Gila Morgenstern\thanks{Dept. of Computer Science, Ben-Gurion University, Beer-Sheva 84105, Israel, {\tt gilamor@cs.bgu.ac.il}. Partially supported by the Lynn and William Frankel Center for Computer Sciences.}}
}

\maketitle
%\doublespace

%%%%%%%%%%%%%%%%%%%%%%%%%%%%%%%%%%%%%%%%%%%%%%%%%%%%%%%%%%%%%%%%%%%%%%%%%%%%%%%%%%%%%%%%%%%%%%%%%%%%%%%%%%%%%%%%%%%%%%%%%%%%%%%%%%%%%

\begin{abstract}
Let $P$ be a set of points in the plane, representing transceivers equipped with a directional antenna of angle $\alpha$ and range $r$. The coverage
area of the antenna at point $p$ is a circular sector of angle $\alpha$ and radius $r$, whose orientation can be adjusted.
For a given assignment of orientations,
the induced {\em symmetric communication graph} (SCG) of $P$ is the undirected graph, in which two vertices (i.e., points) $u$ and $v$ are connected by an edge if and only if $v$ lies in $u$'s sector and vice versa. In this paper we ask what is the smallest angle $\alpha$ for which there exists an integer $n = n(\alpha)$, such that for any set $P$ of $n$ antennas of angle $\alpha$ and unbounded range, one can orient the antennas so that (i) the induced SCG is connected, and (ii) the union of the corresponding wedges is the entire plane. We show (by construction) that the answer to this problem is $\alpha = \pi/2$, for which $n = 4$. Moreover, we prove that if $Q_1$ and $Q_2$ are two quadruplets of
antennas of angle $\pi/2$ and unbounded range, separated by a line, to which one applies the above construction, independently, then
the induced SCG of $Q_1 \cup Q_2$ is connected. This latter result enables us to apply the construction locally, and to solve the following two further problems.

In the first problem ({\em replacing omni-directional antennas with directional antennas}), we are given a connected unit disk graph, corresponding to a set $P$ of omni-directional antennas of range 1, and the goal is to replace the omni-directional antennas by directional antennas of angle $\pi/2$ and range $r=O(1)$ and to orient them, such that the induced SCG is connected, and, moreover, is an $O(1)$-spanner of the unit disk graph, w.r.t. hop distance. In our solution $r = 14\sqrt{2}$ and the spanning ratio is 8. In the second problem ({\em orientation and power assignment}), we are given a set $P$ of directional antennas of angle $\pi/2$ and adjustable range. The goal is to assign to each antenna $p$, an orientation and a range $r_p$, such that the resulting SCG is (i) connected, and (ii) $\sum_{p \in P} r_p^\beta$ is minimized, where $\beta \ge 1$ is a constant. For this problem, we present an $O(1)$-approximation algorithm.

\end{abstract}
%%%%%%%%%%%%%%%%%%%%%%%%%%%%%%%%%%%%%%%%%%%%%%%%%%%%%%%%%%%%%%%%%%%%%%%%%%%%%%%%%%%%%%%%%%%%%%%%%%%%%%%%%%%%%%%%%%%%%%%%%%%%%%%%%%%%%
\newpage

\section {Introduction}

Let $P$ be a set of points in the plane, and assume that each point represents a transceiver equipped with a directional antenna. The {\em coverage area} of a directional antenna located at point $p$ of angle $\alpha$ and range $r$, is a sector of angle $\alpha$ of the disk of radius $r$ centered at $p$, where the orientation of the sector can be adjusted. We denote the coverage area of the antenna at $p$ by $\wedge{p}$ (since when assuming unbounded range the sector becomes a wedge).
The induced {\em symmetric communication graph} (SCG) of $P$ is the undirected graph over $P$, in which two vertices (i.e., points) $u$ and $v$ are connected by an edge if and only if $v \in \wedge{u}$ and $u \in \wedge{v}$.
%This graph is used to model the communication in a wireless network.

The vast majority of the papers dealing with algorithmic problems motivated by wireless networks, consider omni-directional antennas, whose coverage area is often modeled by a disk. Many of these papers study problems, in which one has to assign radii (under some restrictions) to the underlying antennas so as to satisfy various coverage or communication requirements, while optimizing some measure, such as total power consumption.
Only very recently, researches have begun to study such problems for directional antennas. Directional antennas
have some noticeable advantages over omni-directional antennas. In particular, they require less energy to reach a point at a given distance, and they often reduce the level of interferences in the network.

In this paper we ask the following question:
\begin{problem}\label{prob:smallest_angle}
What is the smallest angle $\alpha$ for which there exists an integer $n = n(\alpha)$, such that for any set $P$ of $n$ points in the plane,
representing transceivers equipped with directional antennas of angle $\alpha$ and unbounded range, one can orient the antennas so that (i) the induced SCG is connected, and (ii) the union of the corresponding wedges is the entire plane, i.e., for any point $x \in \mathbb{R}^2$, there exists a point $p \in P$, such that $x \in \wedge{p}$.
\end{problem}

We would like to use the solution to this problem as a building block in the study of the following two important applications.
These applications have been studied under the asymmetric model of communication (where there is a directed edge from $u$ to $v$ if and only if $v \in W_u$), but not under the (more natural) symmetric model of communication, where they are considerably more difficult.

\vspace{-2mm}
\paragraph{Replacing omni-directional antennas with directional antennas.}
Given a set $P$ of points in the plane, let $\UDG(P)$ be the {\em unit disk graph} of $P$ (i.e., two points of $P$ are connected by an edge if and only if the distance between them is at most 1), and assume that $\UDG(P)$ is connected. Notice that $\UDG(P)$ is the communication graph obtained by placing at each point of $P$ an omni-directional antenna of range 1.
The goal is to replace the omni-directional antennas with directional antennas of some small angle $\alpha$ and range $r$, such that (i) $r=O(1)$, (ii) the induced SCG is connected, and, moreover, (iii) the SCG is an $O(1)$-spanner of $\UDG(P)$, w.r.t. hop distance (i.e., there exists a constant $t \ge 1$, such that, for each edge $(p,q)$ of $\UDG(P)$, there is a path between $p$ and $q$ in the SCG, consisting of at most $t$ hops).

\vspace{-2mm}
\paragraph{Orientation and power assignment.}
Given a set $P$ of directional antennas of angle $\alpha$ and adjustable range, the goal is to assign to each antenna $p$, an orientation and a range $r_p$, such that the resulting SCG is (i) connected, and (ii) $\sum_{p \in P} r_p^\beta$ is minimized, where $\beta \ge 1$ is the distance-power gradient (typically between $2$ and $5$).

\paragraph{Related work.}
A major challenge in the context of directional antennas is how to replace omni-directional antennas with directional antennas, such that (strong) connectivity is preserved, as well as other desirable properties, e.g., short range, similar hop distance, etc.
Several papers have considered this problem under the asymmetric model.
%, i.e., there is a directed edge from $u$ to $v$ if and only if $v \in \wedge{u}$.
Caragiannis et al.~\cite{CKK+08} consider the problem of orienting the directional antennas and fixing their range, such that the induced graph is strongly connected and the assigned range is minimized.
They present a 3-approximation algorithm for any angle $\alpha \geq 0$; the maximum hop distance in their construction can be linear.
In their survey chapter, Kranakis et al.~\cite{KKM} consider this problem in a more general setting, where each transceiver is equipped with $k \geq 1$ directional antennas.
Damian and Flatland~\cite{DF10} show how to minimize both the range and the hop-ratio (w.r.t. the unit disk graph), for $\alpha \geq \pi/2$. Subsequently, Bose et al.~\cite{BCDFKM11} show how to do it for any $\alpha >0$.
Carmi et al.~\cite{CKLR09} were the first to study this problem under the symmetric model. They show that it is always possible to obtain a connected graph for $\alpha \ge \pi/3$, assuming the range is unbounded (i.e., equal to the diameter of the underlying point set). Later, a somewhat simpler construction was proposed by Ackerman et al.~\cite{AGP10}.
Carmi et al.~\cite{CKLR09} also observe that for $\alpha < \pi/3$ it is not always possible to orient the antennas such that the induced SCG is connected.
Ben-Moshe et al.~\cite{bcckms-dawn-10} investigate the problem of orienting quadrant antennas with only four possible orientations ($\pi/4$, $3\pi/4$, $5\pi/4$, and $7\pi/4$), and vertical half-strip antennas with only two possible orientations (up and down). Both problems are studied under the symmetric model.

The power assignment problem for omni-directional antennas is known to be NP-hard and was studied extensively; see, e.g.,~\cite{KKKP00,CPS99,CMZ02,C10}.
The orientation and power assignment problem, under the asymmetric model, was considered by Nijnatten~\cite{N08}, who observed that there exists a simple $O(1)$-approximation algorithm for any $\alpha \ge 0$. His solution is based on $O(1)$-approximation algorithms for the energy-efficient traveling salesman tour problem. The quality of his approximation does not depend on $\alpha$. Notice that according to the observation of Carmi et al.~\cite{CKLR09} above, there does not always exist a solution to the problem under the symmetric model when $\alpha < \pi/3$.

\paragraph{Our results.}
In Section~\ref{sec:connected_coverage} we show that the solution to problem~\ref{prob:smallest_angle} is $\alpha=\pi/2$, for which $n=4$. That is, we show how to orient any four antennas of angle $\pi/2$, such that there is a path between any two of them and they collectively cover the entire plane (assuming unbounded range).
In order to use this construction as a building block in the solution of appropriate optimization problems, we need to be able to apply it locally, within a small geographic region, and to have a connection between nearby regions. Unfortunately, in Section~\ref{sec:A_cub_B} we give an example showing that we may not have such a connection. We overcome this difficulty by proving the following theorem. If $Q_1$ and $Q_2$ are two quadruplets of
antennas of angle $\pi/2$ and unbounded range, separated by a line, to which one applies the above construction, independently, then
the induced SCG of $Q_1 \cup Q_2$ is connected.
In Section~\ref{sec:replacing_omni} we address the first application above. We show how to replace omni-directional antennas of range 1 with directional antennas of angle $\pi/2$ and range $14\sqrt{2}$ and orient them, such that the induced SCG is an 8-spanner of the unit disk graph, w.r.t. hop distance.
In Section~\ref{sec:power_assignment} we study the second application. We show how to assign an orientation and range to each antenna in a given set of directional antennas of angle $\pi/2$, such that the induced SCG is connected and the total power consumption is at most some constant times the total power consumption in an optimal solution.
%for omni-directional antennas.

%%%%%%%%%%%%%%%%%%%%%%%%%%%%%%%%%%%%%%%%%%%%%%%%%%%%%%%%%%%%%%%%%%%%%%%%%%%%%%%%%%%%%%%%%%%%%%%%%%%%%%%%%%%%%%%%%%%%%%%%%%%%%%%%%%%%%
\section{Connected coverage of the plane}\label{sec:connected_coverage}

In this section we consider Problem~1.1.
What is the smallest angle $\alpha$ with the property that there exists a positive integer $n=n(\alpha)$,
such that for any set $P$ of at least $n$ points,
one can place directional antennas of angle $\alpha$ and unbounded range at the points of $P$, so that
(i)~the induced SCG is connected,
and (ii)~the plane is entirely covered by the antennas.

We show that the answer to the above question is $\alpha = \pi /2$.
We first show that for $\alpha = \pi/2$, $n=4$ is such an integer. Then, we show that for any $\alpha < \pi/2$, such an integer $n$ does not exist.

\paragraph{Notation.}
We denote the antenna at point $p$ by $\wedge{p}$,
the left ray bounding $\wedge{p}$ (when looking from $p$ into $\wedge{p}$) by $\leftray{p}$,
and the right ray by $\rightray{p}$.
The lines containing these two rays are denoted by $\leftline{p}$ and $\rightline{p}$, respectively.

\subsection {$\alpha=\pi/2$}

\begin{theorem} \label{thm:fourpoints}
Let $P$ be a set of four points in the plane representing the locations of four transceivers equipped with directional antennas of angle $\pi/2$.
Then, one can assign orientations to the antennas, such that
the induced SCG is connected,
and the plane is entirely covered by the four corresponding (unbounded) wedges.
\end{theorem}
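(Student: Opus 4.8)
The plan is to first pin down what the coverage requirement forces. Viewing the four wedges from infinity, a wedge of angle $\pi/2$ covers exactly a $\pi/2$-interval of directions, and the apex location is irrelevant at infinity. Since the four such intervals must together cover all of $[0,2\pi)$ while their lengths sum to exactly $2\pi$, they must \emph{tile} the circle of directions with no overlap (any overlap would leave a gap). Hence, up to a global rotation of the plane, the four antennas must be oriented as the four axis-parallel quadrants, opening NE, NW, SE and SW. So I would fix a rotated coordinate frame and construct precisely such an assignment; the only remaining freedom is the orientation $\theta$ of the frame, the choice of which point receives which orientation, and (equivalently) a center $c=(v,h)$ about which to organize the four quadrants.

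For the main (convex-position) case I would choose $\theta$ and a center $c=(v,h)$ so that each open quadrant of $c$ contains exactly one point, and then orient each point toward the \emph{opposite} quadrant: the SW point opens NE, the SE point opens NW, the NW point opens SE, and the NE point opens SW. The quadrant-membership inequalities immediately yield the four ``boundary'' inequalities ($(P_{NE})_x\le (P_{NW})_x$, $(P_{SE})_x\le (P_{SW})_x$, $(P_{NE})_y\le (P_{SE})_y$, $(P_{NW})_y\le (P_{SW})_y$) needed for the union of the wedges to be all of $\mathbb{R}^2$, so coverage holds. The same inequalities also show that the two diagonal pairs mutually cover one another: the SW/NE pair forms one SCG-edge and the NW/SE pair forms another. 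Thus coverage and two of the required edges come essentially for free.

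At this stage the SCG has two components, $\{\text{SW-point},\text{NE-point}\}$ and $\{\text{NW-point},\text{SE-point}\}$, so the crux is to force one further mutual edge linking them, and I expect this to be the main obstacle. A short computation with the opposite-orientation scheme shows that the only \emph{robust} additional edges are between the two left points (present exactly when the lower-left point's abscissa does not exceed the upper-left point's) and, symmetrically, between the two right points; the ``vertical'' neighbor pairs become edges only in degenerate configurations. Connectivity therefore reduces to making at least one of these two inequalities hold, which I would secure by exploiting the remaining freedom in the frame orientation $\theta$, tracking the two inequalities by a continuity/parity argument as $\theta$ varies to locate a good orientation. I anticipate a genuine case analysis: case (a), the four points in convex position, handled by the center-and-opposite-orientation scheme together with the rotation argument for the connecting edge; and case (b), one point inside the triangle of the other three, where no center yields one point per quadrant, so the interior antenna must instead be aimed into a chosen quadrant while the three outer antennas complete both the cover and the link. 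Degenerate configurations (collinearities or shared coordinates) would be absorbed by perturbing $\theta$ or by letting a dividing line pass through a point.
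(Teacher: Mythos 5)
Your first two steps are sound and in fact reproduce the skeleton of the paper's construction: the tiling-of-directions observation (so the four orientations must be the four quadrants of some rotated frame, which is exactly the paper's remark that the orientations differ pairwise by $\pi/2$, $\pi$, $3\pi/2$), the one-point-per-quadrant center, the opposite-quadrant assignment, the resulting coverage, and the two diagonal edges. The genuine gap is in the connectivity step. Your claim that the two left points form an edge ``exactly when the lower-left point's abscissa does not exceed the upper-left point's'' is wrong: that inequality gives only \emph{one} direction of coverage (the upper-left point lies in the lower-left point's NE-wedge). For an SCG edge you also need the lower-left point to lie in the upper-left point's SE-wedge, which forces the \emph{reverse} inequality; hence the mutual edge exists only when the two abscissae are exactly equal. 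So the left/right pairs are every bit as degenerate as the vertical pairs you dismissed --- as they must be, since rotating the whole picture by $\pi/2$ maps one kind of pair to the other. Consequently, for a generic configuration your scheme produces an SCG with exactly two edges (the two diagonals) and two components, and a continuity/parity argument over $\theta$ aimed at an \emph{open} inequality can never supply the missing edge, because the condition you need is an equality (a measure-zero event in $\theta$).

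The repair is to aim for that equality deliberately, and this is precisely what the paper does: it rotates the frame so that two \emph{adjacent} hull vertices $a,b$ lie on a common horizontal line, and gives each of them a wedge with one bounding ray passing through the other; the mutual edge $(a,b)$ then sits on the wedge boundaries. The price of this specific rotation is that the two diagonal edges are no longer automatic; the paper secures them by a counting argument --- of the $8$ angles that the sides of the convex quadrilateral make with its diagonals, at least $5$ are at most $\pi/2$, so there is an adjacent pair $a,b$ with $\angle oab \leq \pi/2$ and $\angle oba \leq \pi/2$ ($o$ the intersection of the diagonals), which is exactly the condition for $c$ and $d$ to remain inside the wedges of $a$ and $b$ after the alignment. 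This lemma is the ingredient your proposal is missing, and without it (or an equivalent) the convex case does not close. Your case (b) (one point inside the triangle) is also only a one-sentence sketch, whereas the paper handles it by aligning the two non-obtuse base angles of the triangle and orienting the two covered points $c,d$ oppositely; but the substantive flaw is the non-existent ``robust'' edge in case (a).
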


\begin{proof}
Denote the convex hull of $P$ by $CH(P)$.
We distinguish between the case where $CH(P)$ is a convex quadrilateral and the case where it is a triangle.
If $CH(P)$ is a convex quadrilateral, then one of its angles is of size at most $\pi/2$.
Each of the two diagonals of $CH(P)$ divides each of its corresponding two angles into two smaller angles,
such that at least one of these smaller angles is of size at most $\pi/2$.
Thus, at least $5$ of the $8$ angles defined by $CH(P)$ and its two diagonals are of size at most $\pi/2$.
Denote the intersection point of the two diagonals by $o$. Then, there exist two adjacent vertices $a,b$ of $CH(P)$, such that
$\angle oab \leq \pi/2$ and $\angle oba \leq \pi/2$. Therefore, one can orient the antennas, such that
the resulting SCG includes the two diagonals and the edge $(a,b)$, and is thus connected.

\begin{figure}[htb]
 \centering
 \subfigure[]{
   \centering
       \includegraphics[width=0.40\textwidth]{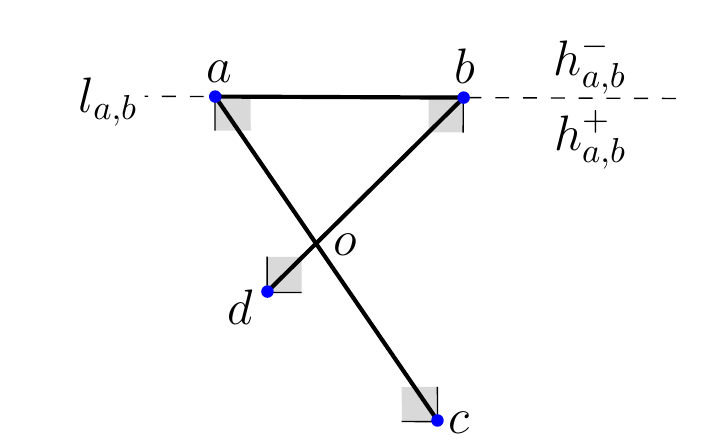}
		   \label{fig:four_pts_convex}
  }
 \subfigure[]{
    \centering
        \includegraphics[width=0.40\textwidth]{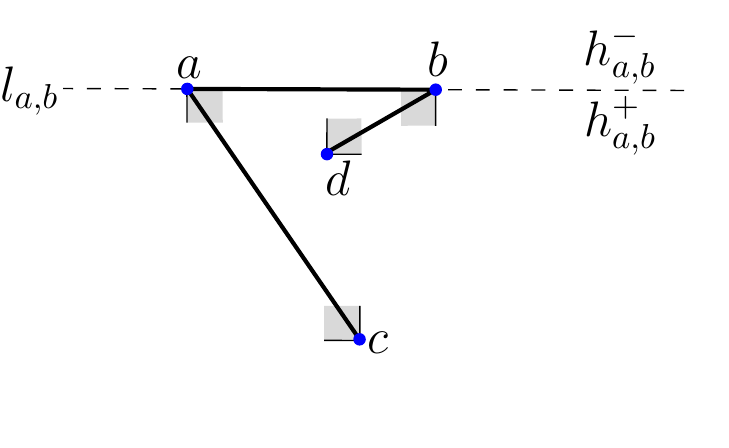}
        \label{fig:four_pts_concave}
 }
	\caption{Proof of Theorem~\ref{thm:fourpoints}.}
\end{figure}

\old{
\begin{figure}[htp]
   \centering
       \includegraphics[width=0.45\textwidth]{fig/lemma1_convex}
       \includegraphics[width=0.45\textwidth]{fig/lemma1_concave2}
   \caption{Proof of Theorem~\ref{thm:fourpoints}.}
   \label{fig:four_pts}
\end{figure}
}
Let $c$ and $d$ be the other two vertices of $CH(P)$, such that $c$ is adjacent to $b$.
Then, we saw that one can orient the antennas, such that the resulting SCG includes
the edges $(a,c)$ and $(b,d)$ and the edge $(a,b)$.
Denote by $l_{a,b}$ the line that passes through $a$ and $b$, and
by $h^+_{a,b}$ the closed half plane defined by $l_{a,b}$ and containing $CH(P)$.
Reorient the antenna $\wedge{a}$ at $a$ (resp., $\wedge{b}$ at $b$), such that it faces $CH(P)$ and one of its bounding rays
passes through $b$ (resp., $a$); see Figure~\ref{fig:four_pts_convex}.
Notice that by doing so, we do not lose any of the graph edges, and, moreover,
the half plane $h^+_{a,b}$ is entirely covered by the two antennas.
To complete the proof we need to adjust the antennas at $c$ and $d$, so that the half plane $h^-_{a,b}$
(on the other side of $l_{a,b}$) is also covered.
Since $c$ (resp. $d$) is covered by $\wedge{a}$ (resp., $\wedge{b}$), we reorient $\wedge{c}$ (resp., $\wedge{d}$) so that
it is opposite $\wedge{a}$ (resp., $\wedge{b}$); see Figure~\ref{fig:four_pts_convex}. By doing so,
we do not lose any of the graph edges, and, moreover, the half plane $h^-_{a,b}$ is covered by $\wedge{c} \cup \wedge{d}$.

Assume now that $CH(P)$ is a triangle $\Delta abc$ and that $d \in \Delta abc$. Then, $\Delta abc$ has at least
two angles of size at most $\pi/2$. W.l.o.g., assume $\angle cab \leq \pi/2$ and $\angle cba \leq \pi/2$.
Orient $\wedge{a}$ and $\wedge{b}$, as above, so that $h^+_{a,b}$ is covered by $\wedge{a} \cup \wedge{b}$.
Notice that both $\wedge{a}$ and $\wedge{b}$ contain $\Delta abc$, and therefore both cover $c$ and $d$.
Orient $\wedge{c}$ and $\wedge{d}$, so that $h^-_{a,b}$ is covered by $\wedge{c} \cup \wedge{d}$. The preceding observation implies that
either $\wedge{c}$ covers $a$ and $\wedge{d}$ covers $b$, or vice versa; see Figure~\ref{fig:four_pts_concave}.
Thus, in any case, the obtained graph is connected.

\end{proof}

Let us observe a few properties of the resulting structure.
First, notice that the orientation of each antenna differs from the orientations of the other three by $\pi/2$, $\pi$, and $3\pi/2$, respectively.
Second, each antenna is {\em coupled} with two of the others; namely, with those whose orientation differs from its own by $\pi/2$ and $3\pi/2$, respectively. For example, in Figure~\ref{fig:four_pts_convex}, $\wedge{c}$ is coupled with $\wedge{b}$ and with $\wedge{d}$.
Notice that each such couple covers a half plane. E.g., $\wedge{c}$ and $\wedge{b}$ cover the appropriate half plane defined by $\rightline{c}$, and $\wedge{c}$ and $\wedge{d}$ cover the appropriate half plane defined by $\rightline{d}$.

\paragraph{Remark.}
Clearly, at least four antennas of angle $\pi/2$ are needed in order to cover the entire plane, so we cannot replace the number four in Theorem~\ref{thm:fourpoints} by a smaller number. However, if we are using antennas of angle $2\pi/3$ (resp., $\pi$), then it is relatively easy to show that three points (resp., two points) are sufficient (i.e., $n(2\pi/3) = 3$ and $n(\pi) = 2$).

\subsection{$\alpha < \pi/2$}
As observed by Carmi et al.~\cite{CKLR09}, if $\alpha < \pi/3$,
then it is not always possible to orient the antennas such that the resulting graph is connected.
(Consider, for example, three antennas located at the vertices of an equilateral triangle.
Each of these antennas can cover at most one of the other two vertices, thus it is impossible to obtain a connected graph in this setting.)

For $\alpha \geq \pi/3$,
Carmi et al.~\cite{CKLR09}, and subsequently Ackerman et al.\cite{AGP10},
showed how to obtain a connected symmetric graph, for any set of antennas of angle $\alpha$.
However, their construction does not ensure that the union of the wedges covers the entire plane.
Actually, it is not always possible to orient a set of antennas with angle $\alpha < \pi/2$, so that the induced graph is connected and the entire plane is covered. (This is true even in the asymmetric model where one only requires strong connectivity, as observed by Bose et al.~\cite{BCDFKM11}.)
\old{
To see this, consider, e.g., a set $P$ of points on a line.
At least one of the corresponding antennas must be oriented such that its wedge is empty of other points of $P$.
See Figure~\ref{fig:counter_example};
in order to cover the plane, one needs, in particular, to cover the point $p$. However, if $p$ is far enough from the line,
then any antenna that covers $p$ cannot cover any other point of $P$ (except for the point at its apex).
Thus, the antenna that covers $p$ is isolated in the resulting SCG.

\begin{figure}[htp]
   \centering
       \includegraphics[width=0.6\textwidth]{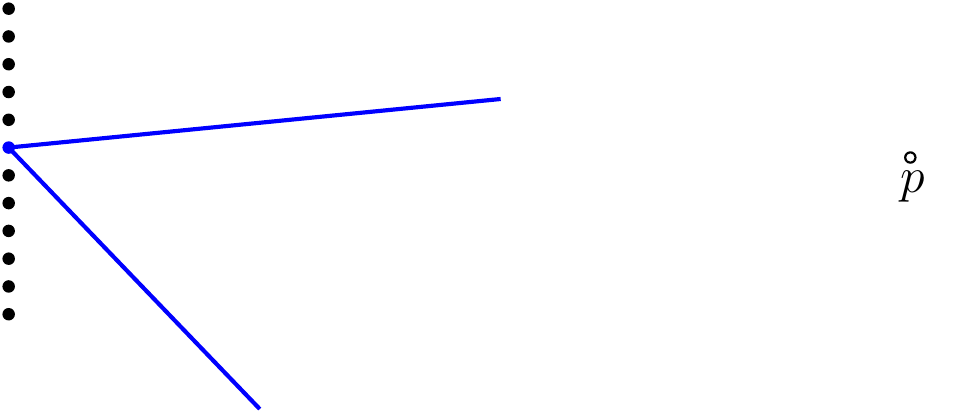}
   \caption{The antenna that covers $p$ is isolated in the resulting SCG.}
   \label{fig:counter_example}
\end{figure}
}
To see this, consider, e.g., a set $P$ of points on a vertical line segment $s$.
In order to cover a point that lies, e.g., far enough to the right of $s$,
at least one of the corresponding antennas must be oriented such that its wedge is empty of points of $P$ (except for the point at its apex).
This antenna is isolated in the resulting SCG.

%%%%%%%%%%%%%%%%%%%%%%%%%%%%%%%%%%%%%%%%%%%%%%%%%%%%%%%%%%%%%%%%%%%%%%%%%%%%%%%%%%%%%%%%%%%%%%%%%%%%%%%%%%%%%%%%%%%%%%%%%%%%%%%%%%%%%
\section{Separated quadruplets are connected}\label{sec:A_cub_B}

Let $A$ and $B$ be two quadruplets of points (representing transceivers) in the plane, and assume that each of the transceivers
is equipped with a directional antennas of angle $\pi/2$. Orient the antennas corresponding to the points in $A$ (resp., in $B$),
such that they satisfy the conditions of Theorem~\ref{thm:fourpoints}.
Clearly, each point in $A$ is covered by at least one antenna corresponding to a point in $B$, and vice versa.
Unfortunately, this does not imply that the SCG induced by $A \cup B$ is connected;
see Figure~\ref{fig:8pts_no_speartion} for an example where there is no edge between
$A$ and $B$ in the SCG of $A \cup B$.
Theorem~\ref{thm:twosets} below is crucial for our subsequent applications. It states that if the quadruplets $A$ and $B$ can be separated
by a line, then the induced SCG is surely connected.

\begin{figure}[htp]
   \centering
       \includegraphics[width=0.5\textwidth]{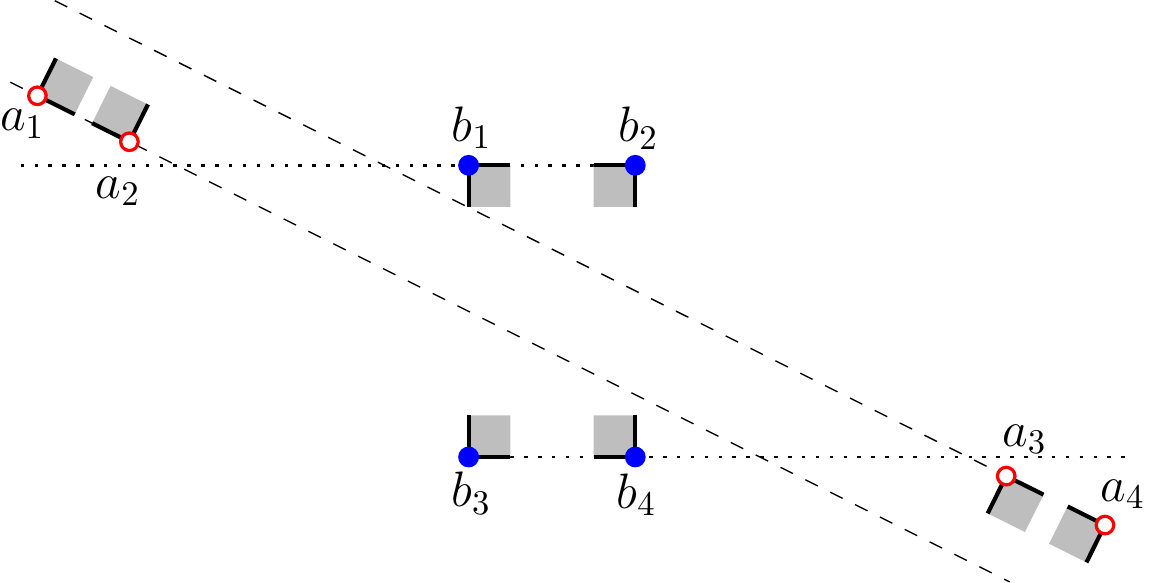}
   \caption{The induced SCG is not connected.}
   \label{fig:8pts_no_speartion}
\end{figure}

\begin{theorem} \label{thm:twosets}
Let $A$, $B$ be two sets of four points each,
and
let the (antennas corresponding to the) points of $A$ and, independently, the points of $B$ be oriented as in the proof of Theorem~\ref{thm:fourpoints}.
If there exists a line $l$ that separates between $A$ and $B$, then
the SCG induced by $A\cup B$ is connected.
\end{theorem}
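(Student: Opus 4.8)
The plan is to reduce the theorem to producing a \emph{single} edge of the SCG between $A$ and $B$. By Theorem~\ref{thm:fourpoints} the antennas of $A$ already induce a connected subgraph, and likewise those of $B$, so the graph on $A\cup B$ is connected the moment there is one point $a\in A$ and one point $b\in B$ with $b\in\wedge{a}$ and $a\in\wedge{b}$. First I would fix coordinates so that the separating line $l$ is vertical, with $A$ in the open left half-plane and $B$ in the open right half-plane (strict separation may be assumed after an arbitrarily small rotation/perturbation). The key simplification is that the range is unbounded, so each $\wedge{p}$ is a translate of a fixed $\pi/2$-cone and membership $q\in\wedge{p}$ depends only on the \emph{direction} of $\segment{p}{q}$: writing $\theta_{pq}$ for that direction, $q\in\wedge{p}\iff\theta_{pq}\in\mathrm{cone}(p)$. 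Hence $(a,b)$ is an edge iff $\theta_{ab}\in\mathrm{cone}(a)$ and $\theta_{ab}+\pi\in\mathrm{cone}(b)$. Since for all $a\in A,\ b\in B$ the direction $\theta_{ab}$ lies in the east half-circle $(-\pi/2,\pi/2)$, only the wedges of $A$ whose cone meets this half-circle (those ``facing $B$'') and, symmetrically, the wedges of $B$ facing west, can participate in a cross edge.

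Next I would bring in the two structural facts that Theorem~\ref{thm:fourpoints} and the remarks after it supply for each quadruple: (a) the four cones are the axis-parallel quadrants of a common local frame, their bounding rays pointing in directions $\phi,\phi+\pi/2,\phi+\pi,\phi+3\pi/2$ for some $\phi$; and (b) the four wedges cover the whole plane, hence they cover a vertical line pushed arbitrarily far to the right. Requiring the east-facing wedges of $A$ to cover that far line with no gap forces an ordering of their apexes in the coordinate transverse to $l$. In the cleanest situation, where both frames are axis-aligned with $l$, this ordering is exactly ``the apex of the NE wedge of $A$ is no higher than the apex of its SE wedge,'' i.e.\ $y^A_{NE}\le y^A_{SE}$, with the mirror inequality $y^B_{NW}\le y^B_{SW}$ for $B$.

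Finally I would argue by contradiction. In the axis-aligned case the only direction-compatible (antipodally matching) pairs are $(\mathrm{NE}_A,\mathrm{SW}_B)$ and $(\mathrm{SE}_A,\mathrm{NW}_B)$; the first is an edge iff $y^B_{SW}\ge y^A_{NE}$ and the second iff $y^B_{NW}\le y^A_{SE}$. If neither edge exists then $y^B_{SW}<y^A_{NE}$ and $y^B_{NW}>y^A_{SE}$, and combining with the two covering inequalities gives $y^B_{NW}>y^A_{SE}\ge y^A_{NE}>y^B_{SW}\ge y^B_{NW}$, a contradiction. Thus a cross edge must exist, proving connectivity of $A\cup B$. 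The scheme of the general proof is the same: ``no cross edge'' turns each compatible wedge pair into a ``wrong-side'' inequality, and chaining these around together with the covering inequalities closes into an impossible cycle $y<y$.

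The hard part will be the general position of the two local frames relative to $l$. When neither frame is axis-aligned with the line, as many as three wedges of $A$ (and three of $B$) reach across it, so the tidy two-pair contradiction above is replaced by a case analysis over the cyclic order of the six cone-boundary directions lying inside the east half-circle. The real work is to show that, whatever this cyclic order, the covering inequalities of step~2 still force one of the overlapping antipodally-compatible wedge pairs to actually realize an edge. I would handle this either by reducing every configuration to the axis-aligned skeleton above, or by a direct monotonicity argument that sweeps a direction across $(-\pi/2,\pi/2)$ and tracks the responsible apexes on each side.
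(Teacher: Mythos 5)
Your reduction to finding a single mutually covering pair $a\in A$, $b\in B$ is exactly the paper's first move, and your axis-aligned analysis is correct: the covering constraints $y^A_{NE}\le y^A_{SE}$ and $y^B_{NW}\le y^B_{SW}$ together with the two ``no edge'' inequalities do close into the impossible cycle you describe. But this settles only a special case, and the part you defer to future work is precisely where the theorem's difficulty lives. The paper organizes the proof around the quantity $x_A$ (the minimum number of wedges of $A$ needed to cover the half plane $\halfplane{B}$), which is $2$ or $3$. Your completed argument corresponds to the subcase where \emph{both} $x_A=2$ and $x_B=2$, i.e.\ both local frames are axis-aligned with $l$; the paper's Case~1 already covers more than this (only one side need be axis-aligned, the other frame arbitrary, handled by an argument about which wedge of $B$ covers $a_1$ and by passing to the coupled antenna $b_2$), and its Case~2 ($x_A=x_B=3$, both frames tilted) is an entirely separate geometric argument that your sketch does not replace.

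The concrete gap is this: when a frame is tilted, three of its wedges cross $l$, and the failure of a compatible pair $(W_a,W_b)$ to form an edge is no longer equivalent to a single linear inequality in the transverse coordinate --- the direction $\theta_{ab}$ can miss the overlap of the two cones on \emph{either} side, so each ``no edge'' assumption is a disjunction. Consequently the chaining trick (``combine wrong-side inequalities with covering inequalities into a cycle $y<y$'') does not go through as stated; you would have to control which branch of each disjunction holds, and that is a genuine case analysis, not a formal extension of the axis-aligned skeleton. The paper avoids this by exploiting the coupling structure from Theorem~\ref{thm:fourpoints}: in its Case~2 there is a central point $a\in A$ whose wedge splits $\halfplane{B}$ into $\wedge{a}\cap\halfplane{B}$, $\topregion{a}$, $\bottomregion{a}$, its two coupled antennas $a_1,a_2$ cover $\bottomregion{a}$ and $\topregion{a}$ from the complementary regions, and symmetrically for $B$; a short analysis of where $b$ falls relative to $a$'s three regions (Cases 2a--2c, the rest being symmetric) then produces a mutually covering pair. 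Neither of your two proposed strategies (reduction to the axis-aligned skeleton, or a sweeping/monotonicity argument) is carried out, and it is not evident either one can be made to work without essentially reconstructing this region/coupling analysis, so as it stands the proposal proves only a strict subcase of the theorem.
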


\begin{proof}
It is enough to show that there exist a point $a \in A$ and a point $b \in B$
that cover each other (i.e., $a \in \wedge{b}$ and $b \in \wedge{a}$).
Assume w.l.o.g. that $l$ is vertical and that the points of $A$ (resp., $B$)
lie to the left (resp., right) of $l$.
Denote by $\halfplane{A}$ (resp., $\halfplane{B}$) the half plane that is defined by $l$ and contains $A$ (resp., $B$).
Let $x_A$ be the smallest number, such that one can pick $x_A$ points of $A$
that (together) cover $\halfplane{B}$. Clearly, $x_A$ is either 2 or 3.
We distinguish between two cases. In the first case, at least one of the two numbers $x_A$ and $x_B$ is 2, where $x_B$ is defined analogously.
In the second case, both $x_A$ and $x_B$ are 3.

\begin{figure}[htb]
 \centering
 \subfigure[Case 1]{
   \centering
       \includegraphics[width=0.4\textwidth]{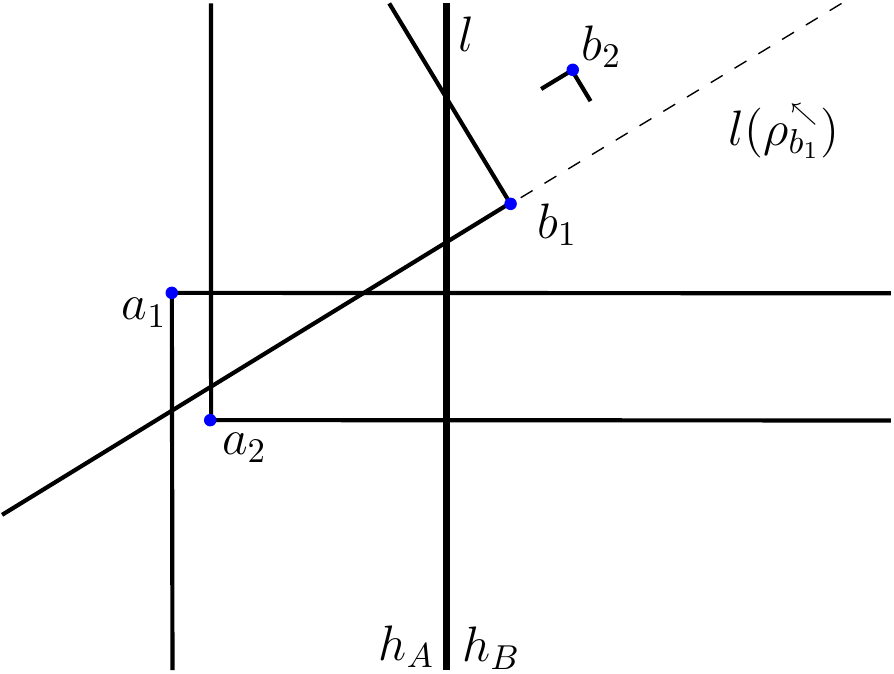}
   		 \label{fig:thm_case1}
  }
 \hspace{0.5cm}
 \subfigure[Case 2]{
    \centering
      \includegraphics[width=0.4\textwidth]{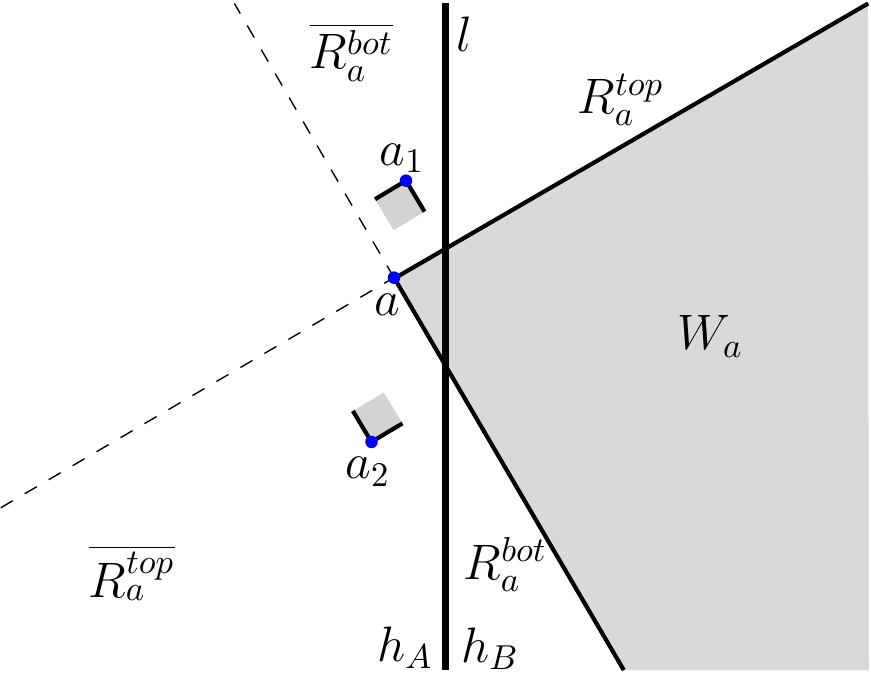}
      \label{fig:thm_case2}
 }
%   \caption{Proof of Theorem~\ref{thm:twosets} --- \ref{fig:thm_case1} Case 1.  \ref{fig:thm_case2} Case 2.}
%   \caption{Proof of Theorem~\ref{thm:twosets}: (a) Case 1. (b) Case 2.}
   \caption{Proof of Theorem~\ref{thm:twosets} --- Cases 1 and 2.}
\end{figure}

\old{
\begin{figure}[htp]
   \centering
       \includegraphics[width=0.6\textwidth]{fig/thm_figure4}
   \caption{Proof of Theorem~\ref{thm:twosets} --- Case 1.}
   \label{fig:thm_case1}
\end{figure}
}
{\bf Case~1.}
There exist two points from one set that (together) cover the half plane containing the other set.
W.l.o.g., assume points $a_1, a_2 \in A$ cover $\halfplane{B}$, and $a_1$ is not below $a_2$.
In this case, $a_1$ and $a_2$ form a couple (see above), and the half plane covered by them
contains $\halfplane{B}$. That is, the rays $\rightray{a_1}$ and $\leftray{a_2}$ are parallel to $l$, where
$\rightray{a_1}$ is pointing downwards and $\leftray{a_2}$ is pointing upwards, and $\leftray{a_1}$ and $\rightray{a_2}$
are perpendicular to $l$ and pointing rightwards; see Figure~\ref{fig:thm_case1}.
Let $b_1 \in B$ be a point that covers the point $a_1$.
If $b_1$ lies in $\wedge{a_1}$, then we are done, since $a_1$ and $b_1$ cover each other.
Assume, therefore, that $b_1$ lies in $\halfplane{B} \setminus \wedge{a_1} \subseteq \wedge{a_2}$, and that $b_1$ does not
cover $a_2$ (since, otherwise, $a_2$ and $b_1$ cover each other and we are done).
It follows that $\leftline{b_1}$ crosses the line segment $\segment{a_1}{a_2}$.
Let $b_2$ be the point in $B$, such that $b_1$ and $b_2$ form a couple, and $b_2$'s orientation is equal to $b_1$'s orientation plus $\pi/2$ (adding counterclockwise). Then, $b_2$ lies above (or on) $\leftline{b_1}$ (and, of course, $b_2 \in \halfplane{B}$), and $a_2$ lies in $\wedge{b_2}$. But,
the former assertion implies that $b_2$ lies in $\wedge{a_2}$, hence $a_2$ and $b_2$ cover each other, and we are done.

\old{
\begin{figure}[htp]
   \centering
       \includegraphics[width=0.6\textwidth]{fig/thm_figure6}
   \caption{Proof of Theorem~\ref{thm:twosets} --- Case 2.}
   \label{fig:thm_case2}
\end{figure}
}

{\bf Case~2.}
$\halfplane{B}$ is covered by three points of $A$ (but not by two), and $\halfplane{A}$ is covered by three points of $B$ (but not by two).
Notice that on each side of $l$ there exists a point, whose wedge
divides the half plane on the other side of $l$ into three disjoint regions.
More precisely, there exists $a \in A$ that divides $\halfplane{B}$
into three disjoint regions:
$\wedge{a} \cap \halfplane{B}$, $\topregion{a}$, and $\bottomregion{a}$; see Figure~\ref{fig:thm_case2}.
We denote the ``complement'' regions of $\topregion{a}$ and $\bottomregion{a}$ by $\overline{\topregion{a}}$ and $\overline{\bottomregion{a}}$, respectively, where $\overline{\topregion{a}}$ (resp., $\overline{\bottomregion{a}}$) is obtained by rotating $\topregion{a}$ (resp., $\bottomregion{a}$) around its apex by $\pi$.
Notice that $\overline{\topregion{a}} \cap \overline{\bottomregion{a}} \neq \emptyset$.
Let $a_1$ be the point in $A$, such that $a$ and $a_1$ form a couple, and $a_1$'s orientation is equal to $a$'s orientation minus $\pi/2$.
Then, $a_1$ lies above $\rightline{a}$ (i.e., $a_1 \in \overline{\bottomregion{a}}$) and $a_1$ covers $\bottomregion{a}$.
Similarly, let $a_2$ be the point in $A$, such that $a$ and $a_2$ form a couple, and $a_2$'s orientation is equal to $a$'s orientation plus $\pi/2$. Then, $a_2 \in \overline{\topregion{a}}$ and $a_2$ covers $\topregion{a}$; see Figure~\ref{fig:thm_case2} for an illustration.

As above, let $b$ be a point in $B$ that divides $\halfplane{A}$ into three disjoint regions,
$\wedge{b} \cap \halfplane{A}$, $\topregion{b}$, and $\bottomregion{b}$; denote the
``complement'' regions of $\topregion{b}$ and $\bottomregion{b}$ by $\overline{\topregion{b}}$ and $\overline{\bottomregion{b}}$, respectively;
let $b_1, b_2 \in B$,
$b_1 \in \overline{\bottomregion{b}}$ and
$b_2 \in \overline{\topregion{b}}$, such that
$b_1$ covers $\bottomregion{b}$ and $b_2$ covers $\topregion{b}$.

We now show that there exist two points, one in $A$ and one in $B$, that cover each other.
We distinguish between a few subcases.

\noindent
{\bf Case 2a.} $b \in \wedge{a}$ and $a \in \wedge{b}$. That is, $a$ and $b$ cover each other.

\begin{figure}[htb]
 \centering
 \subfigure[Case 2b]{
   \centering
       \includegraphics[width=0.4\textwidth]{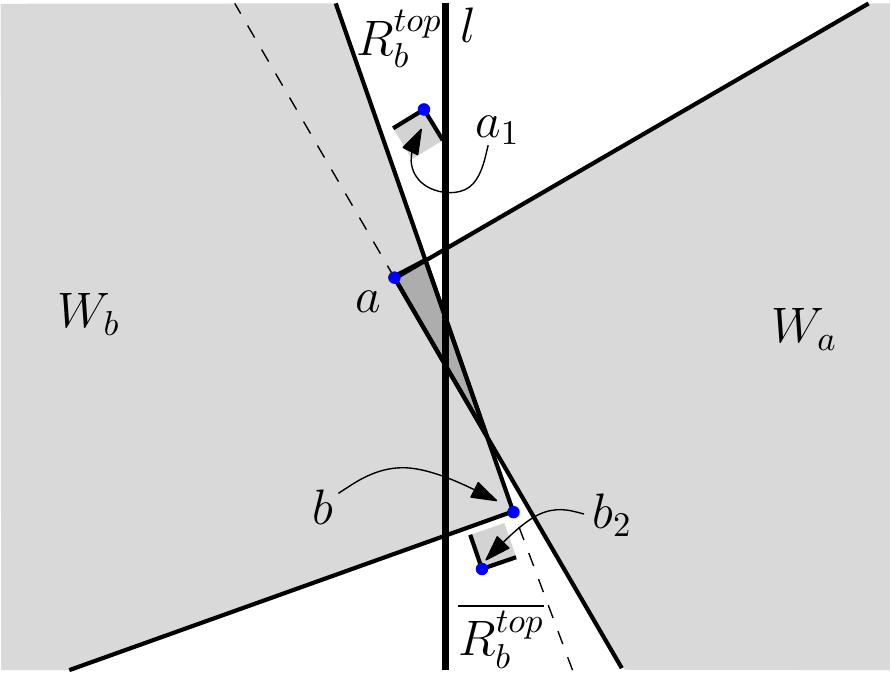}
   \label{fig:thm_case2b}
  }
 \hspace{0.5cm}
 \subfigure[Case 2c]{
     \centering
       \includegraphics[width=0.4\textwidth]{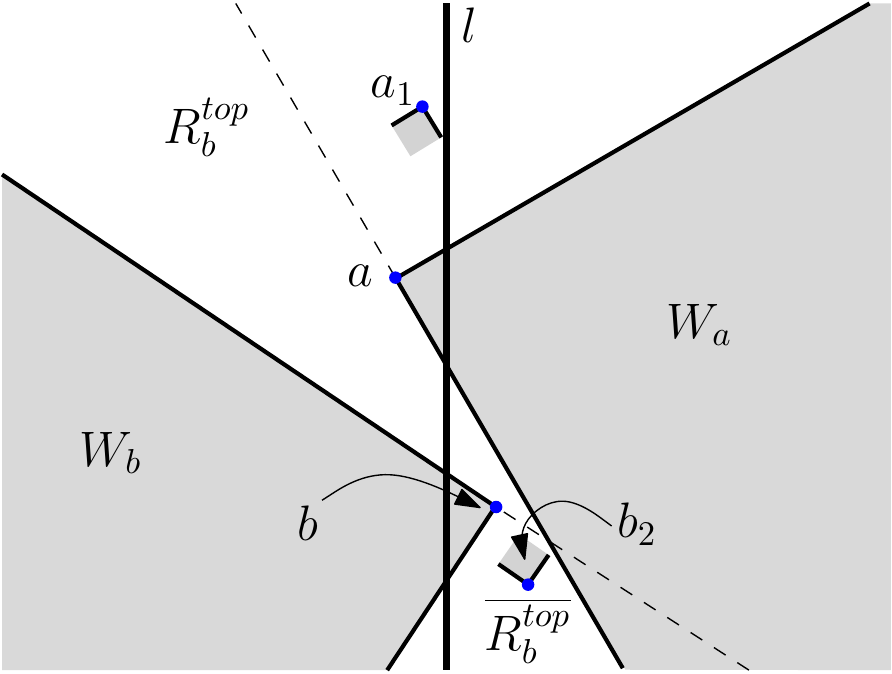}
   \label{fig:thm_case2c}
 }
%   \caption{Proof of Theorem~\ref{thm:twosets}: (a) Case 2b. (b) Case 2c.}
   \caption{Proof of Theorem~\ref{thm:twosets} --- Cases 2b and 2c.}
\end{figure}

\old{
\begin{figure}[htp]
   \centering
       \includegraphics[width=0.6\textwidth]{fig/thm_figure7}
   \caption{Proof of Theorem~\ref{thm:twosets} --- Case 2b.}
   \label{fig:thm_case2b}
\end{figure}
}

\noindent
{\bf Case 2b.} $b \in \bottomregion{a}$ (that is, $b$ is covered by $a_1$)
and $a \in \wedge{b}$; see Figure~\ref{fig:thm_case2b}.
Assume that $a_1 \notin \wedge{b}$ (otherwise, $a_1$ and $b$ cover each other),
then, $a_1 \in \topregion{b}$ and thus covers $\overline{\topregion{b}}$.
Recall that $b_2 \in \overline{\topregion{b}}$ and covers $\topregion{b}$.
We conclude that $a_1$ and $b_2$ cover each other.

\old{
\begin{figure}[htp]
   \centering
       \includegraphics[width=0.6\textwidth]{fig/thm_figure8}
   \caption{Proof of Theorem~\ref{thm:twosets} --- Case 2c.}
   \label{fig:thm_case2c}
\end{figure}
}

\noindent
{\bf Case 2c.} $b \in \bottomregion{a}$ and $a \notin \wedge{b}$; see Figure~\ref{fig:thm_case2c}.
Then, $a, a_1 \in \topregion{b}$, and, therefore, $b_2$ covers both $a$ and $a_1$.
Now, if $b_2$ lies in $\wedge{a}$, then $a$ and $b_2$ cover each other, and we are done.
Otherwise, $b_2 \in \bottomregion{a}$, but then, $b_2$ and $a_1$ cover each other.

Finally, it is easy to verify that all other subcases are symmetric to either Case~2b or Case~2c.

\end{proof}

%%%%%%%%%%%%%%%%%%%%%%%%%%%%%%%%%%%%%%%%%%%%%%%%%%%%%%%%%%%%%%%%%%%%%%%%%%%%%%%%%%%%%%%%%%%%%%%%%%%%%%%%%%%%%%%%%%%%%%%%%%%%%%%%%%%%%
\section{Replacing omni-directional antennas with directional antennas}\label{sec:replacing_omni}

Let $P$ be a set of $n$ points in the plane. The {\em unit disk graph} of $P$, denoted $\UDG(P)$, is the graph over $P$, in which there is an edge between two points if and only if the distance between them is at most 1. Notice that $\UDG(P)$ is the communication graph obtained, when each point of $P$ represents a transceiver with an omni-directional antenna of range 1. Assume that $\UDG(P)$ is connected.
Our goal in this section is to replace the omni-directional antennas with directional antennas of angle $\pi/2$ and range $r=O(1)$ and to orient them, such that the induced {\em symmetric} communication graph is (i) connected, and (ii) a $c$-spanner of $\UDG(P)$, with respect to hop distance, where $c$ is an appropriate constant. We first show that this can be done for $r = 14 \sqrt 2$ and $c = 9$.
Then, in Section~\ref{subsec:reducing_hops}, we reduce the hop ratio (i.e., $c$) from $9$ to $8$.

The main idea underlying our construction is to apply Theorem~\ref{thm:fourpoints} multiple times, each time to a cluster of points within a small region, and to use Theorem~\ref{thm:twosets} to establish that the SCG induced by any two such clusters is connected (assuming unbounded range).

We now describe our construction.
Lay a regular grid $\grid$ over $P$, such that the length of a cell side is $7$.
(If $(x_0,y_0)$ is the bottom-left corner of a cell $\C$, then $\C$ is the semi-open square $[x_0, x_0+7) \times [y_0, y_0+7)$.)
For a cell $\C$ of $\grid$, the {\em block} of $\C$ is the $3 \times 3$ portion
of $\grid$ centered at $\C$.
Each of the $8$ cells surrounding $\C$ is a {\em neighbor} of $\C$.
A cell of $\grid$ is considered {\em full} if it contains at least four points of $P$.
It is considered {\em non-full} if it contains at least one and at most three points of $P$.
Proposition~\ref{prop:fullcell} below is analogous to a proposition of Bose et al.~\cite{BCDFKM11}, referring to a similar grid.

\begin{proposition}{(\!\!\cite{BCDFKM11})} \label{prop:fullcell}
Let $\C$ be a cell of $\grid$. Then, any path in $\UDG(P)$ that begins at a point in $\C$
and exits the block of $\C$, must pass through a full cell in $\C$'s block (not including $\C$ itself, which may or may not be full).
In particular, if there are points of $P$ outside $\C$'s block, then at least one of the 8 neighbors of $\C$ is full.
\end{proposition}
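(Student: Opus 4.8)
The plan is to fix coordinates so that $\C=[0,7)\times[0,7)$, making the block the square $R=[-7,14)\times[-7,14)$, and to reduce the statement to a single ``exit band.'' Given a path $p_0,\dots,p_k$ in $\UDG(P)$ with $p_0\in\C$ and $p_k\notin R$, I would let $p_j$ be the first vertex outside $R$. Since consecutive vertices are within distance $1$, at least one coordinate of $p_j$ leaves $[-7,14)$, and by the symmetry of $\grid$ under reflections and quarter-turns about the center of $\C$ I may assume $x_j\ge 14$. Letting $p_i$ be the last index before $j$ with $x_i<7$ (which exists because $x_0<7$), every vertex $p_{i+1},\dots,p_{j-1}$ then lies in the right column of the block, i.e.\ in the band $\beta=[7,14)\times[-7,14)$: each such vertex lies in $R$ (as $j$ is the first exit), so its $y$-coordinate is in $[-7,14)$ and its $x$-coordinate is below $14$, while $x\ge 7$ by maximality of $i$. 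A length count finishes the setup: since $x_j-x_i>7$ and each step moves $x$ by at most $1$, we get $j-i\ge 8$, so the sub-path contains at least $7$ vertices, all inside $\beta$.

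The band $\beta$ consists of exactly three neighbor cells of $\C$, stacked vertically: $D_{\mathrm{bot}}=[7,14)\times[-7,0)$, $D_{\mathrm{mid}}=[7,14)\times[0,7)$, and $D_{\mathrm{top}}=[7,14)\times[7,14)$, none of which is $\C$. I would then split into two cases according to how many of these cells the sub-path visits. If it meets at most two of them, then its $\ge 7$ vertices lie in at most two cells, so by pigeonhole one of these cells contains at least $\lceil 7/2\rceil=4$ vertices of the path and is therefore full; since the path passes through it, we are done. If instead the sub-path meets all three cells, it has a vertex in $D_{\mathrm{bot}}$ (where $y<0$) and a vertex in $D_{\mathrm{top}}$ (where $y\ge 7$); walking along the path between them, the $y$-coordinate must rise from below $0$ to at least $7$, and the intervening vertices with $y\in[0,7)$ all lie in $D_{\mathrm{mid}}$ (their $x$ is already confined to $[7,14)$). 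Because each step changes $y$ by at most $1$, this forced vertical traversal of the height-$7$ cell $D_{\mathrm{mid}}$ deposits at least $7$ path vertices in $D_{\mathrm{mid}}$, which is thus full. Either way the path passes through a full neighbor of $\C$, proving the main statement.

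The ``in particular'' clause then follows immediately: when $\C$ is non-empty and there is a point of $P$ outside the block, connectivity of $\UDG(P)$ yields a path from a point of $\C$ to that outside point, which necessarily exits the block, and the full cell it passes through is one of the $8$ neighbors of $\C$. The step I expect to be the real obstacle---and the reason the constants $7$ and $4$ are chosen as they are---is that the straightforward pigeonhole is too weak: seven vertices spread over the three band cells only guarantee three in some cell. The decisive observation that rescues the bound is geometric rather than combinatorial, namely that the cells have height equal to the band width, so a path cannot touch both the top and bottom band cells without fully crossing the middle one; this effectively reduces the count from three cells to two and makes the threshold $4$ attainable. I would take care to state the vertical-traversal estimate cleanly (isolating the last vertex with $y<0$ and the first with $y\ge 7$) so that the claimed lower bound in $D_{\mathrm{mid}}$ is rigorous.
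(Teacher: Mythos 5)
Your proposal is correct and follows essentially the same argument as the paper's own proof: take the first vertex of the path outside the block, take the last preceding vertex not yet in the boundary row/column of the block, note that the at least $7$ intermediate vertices all lie in the three cells of that row/column, and conclude by pigeonhole when at most two of these cells are visited, and by the forced crossing of the middle cell (depositing at least $7$ points there) when all three are visited. The only cosmetic differences are your choice of a rightward rather than upward exit, the explicit coordinates, and your spelled-out justification of the ``in particular'' clause, which the paper leaves implicit.
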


\begin{figure}[htp]
   \centering
   \subfigure[]{
       \includegraphics[width=0.45\textwidth]{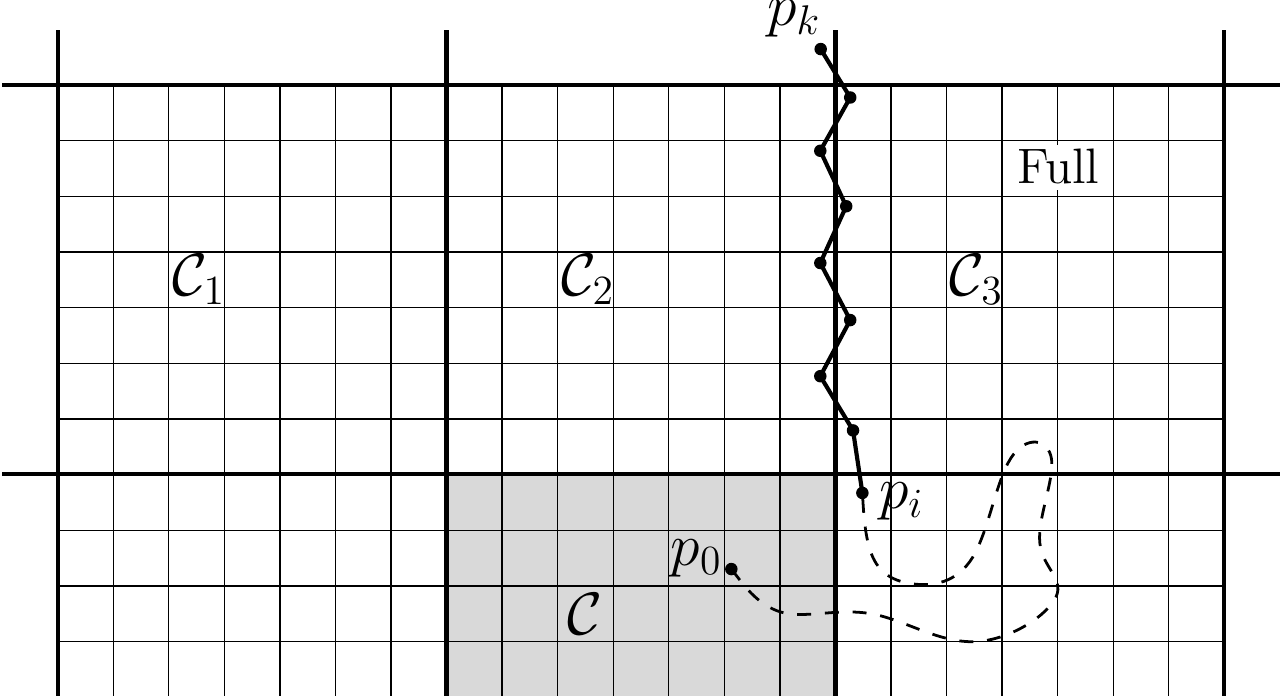}
       }
   \subfigure[]{
       \includegraphics[width=0.45\textwidth]{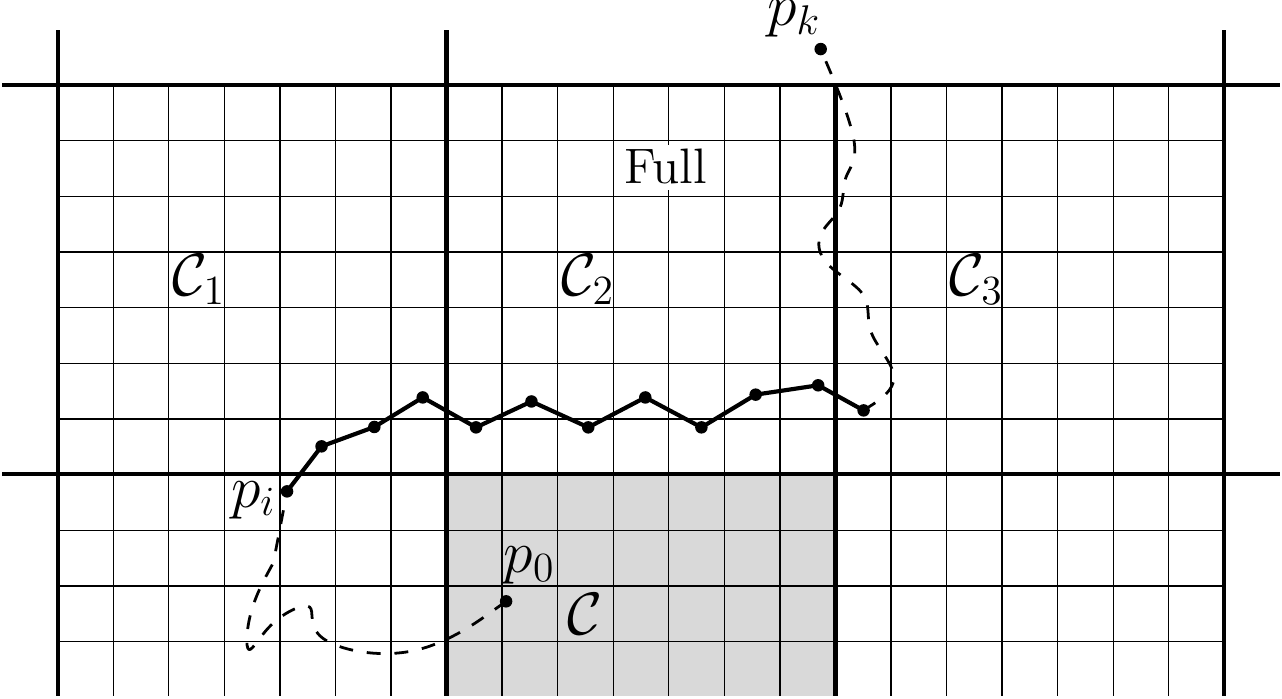}
       }

   \caption{Proposition~\ref{prop:fullcell}.}
   \label{fig:fullcell}
\end{figure}

\old{
\begin{figure}[htp]
   \centering
       \includegraphics[width=0.45\textwidth]{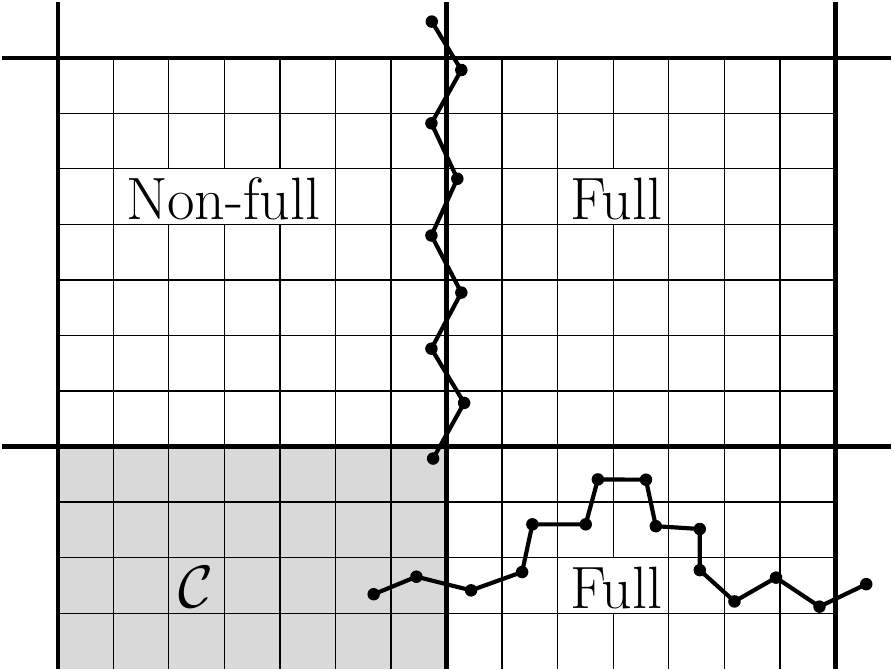}
   \caption{Proposition~\ref{prop:fullcell}.}
   \label{fig:fullcell}
\end{figure}
}

\begin{proof}
%Notice that in such a path, the number of points lying in $\C$'s block but not in $\C$, is at least 7.
%In the ``worst'' case, these points lie in two adjacent cells (in $\C$'s block), and, therefore, at least one of them is full;
%see Figure~\ref{fig:fullcell} for an illustration.

\old{
Observe that since each cell is a semi-open square of side length $7$, then any path connecting two non-neighboring cells
contains at least $7$ middle-points.
Now, let $\pi = \left\langle p_0,p_1,\cdots,p_k\right\rangle$ be a path that begins at a point $p_0 \in \C$
and exits the block of $\C$, where $p_k$ is the first point along $\pi$ not in $\C$'s block.
Assume, e.g., that $p_k$ lies above $\C$'s block,
and let $p_i$ be the first point along $\pi$ for which the points $p_i, p_{i+1}, \cdots, p_k$ all lie above the cell $\C$.
By our observation at the beginning of the proof, $1 \leq i \leq 7$ (clearly $i >0$, and $p_{i-1}$ and $p_k$ lies in non-neighboring cells).
Let $\C_1$, $\C_2$, and $\C_3$
be the top cells in the block of $\C$, left to right,
then $p_i, p_{i+1}, \cdots, p_{k-1} \in \C_1 \cup \C_2 \cup \C_3$.
If these (at least $7$) points lies in one or in two of the three cells, then at least one cell contains at least $4$ points, namely is full.
Otherwise, there exists a subpath of $\left\langle p_i, p_{i+1}, \cdots, p_{k-1} \right\rangle$ that connects a point in $\C_1$ with a point in $\C_3$.
By our observation at the beginning of the proof,
such a path contains at least $7$ middle-points lying in $\C_2$, implying that $C_2$ in this case is full.
}

Let $\Pi = \left\langle p_0,p_1,\cdots,p_k\right\rangle$ be a path that begins at a point $p_0 \in \C$
and exits $\C$'s block, where $p_k$ is the first point in $\Pi$ that is not in $\C$'s block.
Assume, e.g., that $p_k$ lies on or above the line containing the top side of $\C$'s block,
and let $p_i$ be the last point in $\Pi$ that lies below the line containing the top side of $\C$; see Figure~\ref{fig:fullcell}.
Then, all points in the subpath
$\left\langle p_i,p_{i+1},\cdots,p_k\right\rangle$ of $\Pi$, except for the two extreme ones, lie in the union of the top three cells of $\C$'s block, and their number (excluding the two extreme points) is at least 7. Let $\C_1$, $\C_2$, and $\C_3$ be the top three cells, from left to right, of $\C$'s block. If $\{p_{i+1},\ldots,p_{k-1}\}$ is contained in only one of these cells, or, alternatively, in only two of these cells, then at least one of these cells contains at least 4 points, and we are done (Figure~\ref{fig:fullcell}(a)). Otherwise, at least one of the points lies in $\C_1$ and at least one of the points lies in $\C_3$, implying that $\C_2$ contains at least 7 points (Figure~\ref{fig:fullcell}(b)).

\end{proof}

If none of the grid cells is full, then, it is easy to see that $P$ is contained in a $14 \times 14$ square. In this case, we can apply Theorem~\ref{thm:fourpoints} to an arbitrary subset $P'$ of four points of $P$, and orient each of the other points of $P$ towards a point of $P'$ that covers it. By setting $r = 14 \sqrt{2}$, we obtain a SCG, in which the hop distance between any two points is at most 5 (see below).
We thus may assume that at least one of the grid cells is full.

For each cell $\C$ of $\grid$,
we orient the points in $\C$ as follows.
If $\C$ is full, then arbitrarily pick four points in $\C$ as $\C$'s {\em hub points},
and orient these hub points according to Theorem~\ref{thm:fourpoints}.
Next, orient each non-hub point in $\C$ towards one of the hub points of $\C$ that covers it.
If $\C$ is non-full, then, for each point $p \in \C$,
pick a full cell closest to $p$,
that is, a full cell containing a point $x_p$, such that the hop distance (in $\UDG(P)$) from $p$ to $x_p$ is not greater than the hop distance from $p$ to any other point of $P$ lying in a full cell.
Denote this cell by $\clst{p}$.
Notice that by Proposition~\ref{prop:fullcell} $\clst{p}$ is a neighbor of $\C$.
Orient each $p \in \C$ towards a hub point of $\clst{p}$ that covers it.
Finally, set $r = 14 \sqrt 2$, so that each antenna can reach any point in its own cell or in a neighboring cell, provided that this point is within the antenna's wedge.

Let $\graph$ be the resulting SCG.
Lemma~\ref{lemma:hop_distance}, below, states that $\graph$ is a $9$-spanner of $\UDG(P)$, w.r.t. hop distance.
In particular, since $\UDG(P)$ is connected, then so is $\graph$.
We first prove the following auxiliary lemma.

\begin{lemma}
\label{lemma:neighboring}
Let $(p,q)$ be an edge of $\UDG(P)$,
and let $\C_p$ and $\C_q$ be the cells of $\grid$, such that $p \in \C_p$ and $q \in \C_q$.
\begin{itemize}
	\item[]
	(i)  	If $\C_p$ and $\C_q$ are both full, then either $\C_p = \C_q$ or $\C_p$ and $\C_q$ are neighbors.
	\item[]
	(ii) 	If $\C_p$ is full and $\C_q$ is non-full, then either $\C_p = \clst{q}$ or $\C_p$ and $\clst{q}$ are neighbors.
	\item[]
	(iii) If $\C_p$ and $\C_q$ are both non-full, then either $\clst{p} = \clst{q}$ or $\clst{p}$ and $\clst{q}$ are neighbors.
\end{itemize}			
\end{lemma}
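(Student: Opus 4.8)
The plan is to exploit the two numerical facts of the construction: the cell side is $7$, while every edge of $\UDG(P)$ has length at most $1\ll 7$. Throughout I will use (Proposition~\ref{prop:fullcell}) that for a point $x$ in a non-full cell, $\clst{x}$ is a neighbor of $x$'s cell, and that a hop-shortest path from $x$ to its witness $x_x\in\clst{x}$ has \emph{all} of its interior vertices in non-full cells (otherwise a closer full cell would exist). Part (i) is immediate: since $(p,q)$ is an edge, $|pq|\le 1<7$, so the grid indices of $\C_p$ and $\C_q$ differ by at most one in each coordinate, hence the cells coincide or are neighbors.

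For (ii), $\C_p\neq\C_q$ (one is full, one non-full), and the point is to compare $\C_p$ with $\clst{q}$. Because $q$ lies in a non-full cell while $p$ lies in the full cell $\C_p$ at hop distance $1$ from $q$, the minimum hop distance from $q$ to a point lying in a full cell equals $1$; hence the witness $x_q\in\clst{q}$ realizing this minimum satisfies $|qx_q|\le 1$. The triangle inequality then gives $|px_q|\le |pq|+|qx_q|\le 2<7$, so the cells containing $p$ and $x_q$, namely $\C_p$ and $\clst{q}$, coincide or are neighbors.

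For (iii) I would first localize $\clst{p}$ and $\clst{q}$ by a prepending argument. Let $q=v_0,\dots,v_{d_q}=x_q$ be a hop-shortest path from $q$ to $\clst{q}$, all of whose interior vertices lie in non-full cells. Prepending the edge $(p,q)$ gives a path from $p$ to the full cell $\clst{q}$ whose interior vertices ($q$ together with the $v_i$) are all in non-full cells. If $\clst{q}$ were outside $\C_p$'s block, this path would leave the block, so by Proposition~\ref{prop:fullcell} one of its vertices would lie in a full cell inside the block before $x_q$ --- impossible. Hence $\clst{q}$ is a neighbor of $\C_p$, and symmetrically $\clst{p}$ is a neighbor of $\C_q$. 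Together with the trivial memberships, both $\clst{p}$ and $\clst{q}$ lie in $N(\C_p)\cap N(\C_q)$, where $N(\cdot)$ denotes the set of neighboring cells. When $\C_p$ and $\C_q$ are diagonally adjacent, this intersection consists of just two cells, which are themselves neighbors, so (iii) follows at once.

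The remaining configurations --- $\C_p,\C_q$ edge-adjacent, or $\C_p=\C_q$ --- are the main obstacle, since there $N(\C_p)\cap N(\C_q)$ still contains pairs of cells at grid-distance two, and the combinatorial localization does not by itself rule these out. Here a metric argument is needed, resting on the key geometric fact that a non-full cell cannot be crossed: it holds at most three points, so any hop-path confined to it spans a region of diameter at most $2$. Consequently, if the shortest path from $p$ exits $\C_p$ toward $\clst{p}$, then $p$ lies within a small constant distance (at most $3$) of the side or corner of $\C_p$ facing $\clst{p}$, and likewise for $q$; moreover, since $\C_p$ (resp.\ $\C_q$) is non-full, the two shortest paths from $p$ and $q$ cannot both linger inside the cell, forcing each of $p,q$ to exit quickly in the direction of its closest full cell. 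I would then verify, using $|pq|\le 1$, that if $\clst{p}$ and $\clst{q}$ were at grid-distance two, the boundary features of the cells that $p$ and $q$ must be near would lie more than distance $1$ apart, contradicting $|pq|\le 1$. Because the localization above leaves only a handful of such bad pairs, this is a short finite case check; I expect the careful bookkeeping of these exit-direction and position constraints, rather than any single deep idea, to be the real difficulty.
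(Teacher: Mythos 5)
Your parts (i) and (ii) coincide with the paper's own proof, and the first step of your part (iii) --- prepending the edge $(p,q)$ to a shortest path from $q$ to its witness $x_q$, applying Proposition~\ref{prop:fullcell} to $\C_p$'s block, and concluding that $\clst{q}$ is a neighbor of $\C_p$ (symmetrically, $\clst{p}$ of $\C_q$) --- is correct, and it does settle the case where $\C_p$ and $\C_q$ are diagonally adjacent. The genuine gap is in the cases you yourself flag as the obstacle ($\C_p=\C_q$ or edge-adjacent): the metric argument you propose rests on a claim that is simply false. It is not true that if $\clst{p}$ is, say, the east neighbor of $\C_p$, then the shortest path from $p$ exits eastward, or that $p$ lies within distance about $3$ of the east side of $\C_p$. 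Concretely, let $\C_p=[0,7)^2$ contain exactly $p=(2,6.9)$, $q=(3,6.9)$, $z=(4,6.9)$; put three points $(4.9,7.1)$, $(5.9,7.1)$, $(6.9,7.1)$ in the north cell, one point $(7.8,7.2)$ in the northeast cell, and four clustered points around $(7.8,6.3)$ in the east cell, all other cells empty. The only full cell is the east cell, so $\clst{p}$ is the east cell; yet the unique path from $p$ to it exits $\C_p$ through the \emph{north} side and runs east inside two non-full cells, and $p$ sits at distance $5$ from the east side of $\C_p$ (stretching the chain pushes this toward $6$). So the position of $p$ inside $\C_p$ does not control the direction of $\clst{p}$ at all --- what controls it is the hop structure of chains through non-full cells --- and the ``short finite case check'' you defer to has no sound premise to run on.

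The idea your proof is missing is to apply Proposition~\ref{prop:fullcell} to the block of the \emph{full} cell $\clst{p}$, rather than only to the blocks of the non-full cells $\C_p$ and $\C_q$. This is what the paper does: assume $\clst{p}\ne\clst{q}$ and that they are not neighbors, and let $\Pi$ be the concatenation of the shortest path from $x_p$ to $p$, the edge $(p,q)$, and the shortest path from $q$ to $x_q$. Then $\Pi$ starts in $\clst{p}$ and ends at $x_q\in\clst{q}$, which by assumption lies outside $\clst{p}$'s block, so $\Pi$ must pass through a full cell $\C$ in $\clst{p}$'s block with $\C\ne\clst{p}$ and $\C\ne\clst{q}$. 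Since $\C$ is full while $\C_p$ and $\C_q$ are not, the vertex of $\Pi$ lying in $\C$ is neither $p$ nor $q$; hence it occurs either strictly before $p$ on $\Pi$, making $\C$ a full cell strictly closer to $p$ than $\clst{p}$, or strictly after $q$, making $\C$ strictly closer to $q$ than $\clst{q}$ --- a contradiction either way. Note that this argument requires no case distinction on the relative position of $\C_p$ and $\C_q$, which is exactly the information your decomposition into diagonal, edge-adjacent, and equal cases was unable to supply.
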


\begin{proof}
(i) This is obvious, since the Euclidean distance between $p$ and $q$ is at most 1 and the side length of a cell is $7$.
\\
(ii) Let $x_q \in \clst{q}$ be a point that determines the hop distance between $q$ and $\clst{q}$ (see above).
By construction, the hop distance in $\UDG(P)$ from $q$ to $x_q$ is not greater than the hop distance from $q$ to $p$, which is $1$.
Thus, the Euclidean distance between $p \in \C_p$ and $x_q \in \clst{q}$ is at most $2$, whereas the side length of a cell is $7$.
We conclude therefore that either $\C_p = \clst{q}$ or $\C_p$ and $\clst{q}$ are neighbors.
\\
(iii) Assume to the contrary that $\clst{p} \ne \clst{q}$ and that $\clst{p}$ and $\clst{q}$ are not neighbors.
Let $x_p \in \clst{p}$ (resp., $x_q \in \clst{q}$) be a point that determines the hop distance between $p$ and $\clst{p}$ (resp., $q$ and $\clst{q}$).
Consider the path $\Pi$ in $\UDG(P)$ that is obtained by concatenating the shortest path from $x_p$ to $p$, the edge $(p,q)$, and the shortest path from $q$ to $x_q$.
By our assumption,
$\clst{q}$ is not in $\clst{p}$'s block (and vice versa),
thus $\Pi$ starts at $\clst{p}$ and exits its block. By
Proposition~\ref{prop:fullcell},
$\Pi$ passes through a full cell $\C$ in $\clst{p}$'s block,
other than $\clst{p}$ (and other than $\clst{q}$, which is not in $\clst{p}$'s block).
Since $\C$ is full, $\C \neq \C_p,\C_q$. Now, if $\Pi$ visits $\C$ before it visits the point $p$,
then we get that $\C$ is a full cell closer to $p$ than is $\clst{p}$, and, if $\Pi$ visits $\C$ after it visits $q$,
then we get that $\C$ is a full cell closer to $q$ than is $\clst{q}$. Thus,
in both cases we arrive at a contradiction.
\end{proof}

\begin{lemma}
\label{lemma:hop_distance}
$\graph$ is a $9$-spanner of $\UDG(P)$, w.r.t. hop distance.
\end{lemma}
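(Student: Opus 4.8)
The plan is to prove the bound one $\UDG(P)$ edge at a time. Since hop distances add along a path, it suffices to exhibit, for every edge $(p,q)$ of $\UDG(P)$, a path of at most $9$ hops between $p$ and $q$ in $\graph$; concatenating such paths along a shortest $\UDG(P)$-path between arbitrary points $u,v$ then gives a path in $\graph$ of at most $9\cdot d_{\UDG(P)}(u,v)$ hops, which is exactly the $9$-spanner property. So I would fix an edge $(p,q)$ of $\UDG(P)$ and let $\C_p,\C_q$ be the cells containing $p,q$.

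First I would record the single most useful structural fact of the construction: every point $x\in P$ is within one hop in $\graph$ of a hub point of some full cell. Indeed, if $\C_x$ is full then $x$ is either itself a hub or was oriented towards a hub of $\C_x$ that covers it (that hub lies in the same cell, hence within range $14\sqrt2$, and covers $x$, so the edge is mutual); if $\C_x$ is non-full then $x$ was oriented towards a hub of $\clst{x}$ that covers it, again a mutual edge within range. Call this hub $h(x)$ and let $D(x)$ be the full cell containing it, i.e.\ $\C_x$ if $\C_x$ is full and $\clst{x}$ otherwise. Thus $d_\graph(p,h(p))\le 1$ and $d_\graph(q,h(q))\le 1$, and the whole problem reduces to bounding the hop distance between a hub of $D(p)$ and a hub of $D(q)$.

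Next I would apply Lemma~\ref{lemma:neighboring} to control the relationship between $D(p)$ and $D(q)$: in each of its three cases it yields that these two full cells either coincide or are grid neighbors. When $D(p)=D(q)$, both $h(p)$ and $h(q)$ are hubs of the same full cell, and by Theorem~\ref{thm:fourpoints} the four hubs of a cell induce a connected SCG; a connected graph on four vertices has diameter at most $3$, so $d_\graph(p,q)\le 1+3+1=5$. When $D(p)$ and $D(q)$ are distinct neighboring full cells, the crux is to join their hub sets, and here I would invoke Theorem~\ref{thm:twosets}: the four hubs of $D(p)$ and the four hubs of $D(q)$ are two quadruplets oriented exactly as in Theorem~\ref{thm:fourpoints}, and any two neighboring cells (including diagonal neighbors, which meet only at a corner) are separated by a line, so the eight hubs induce a connected SCG; in particular there is a hub of $D(p)$ and a hub of $D(q)$ that cover each other. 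I would check that $r=14\sqrt2$ realizes this connecting edge, since a cell together with any neighbor lies inside a $14\times14$ axis-parallel square whose diagonal is $14\sqrt2$. The union of the two diameter-$3$ hub-trees joined by this edge has diameter at most $3+1+3=7$, giving $d_\graph(p,q)\le 1+7+1=9$. Assembling the cases completes the edge bound; the degenerate situation in which no cell is full is handled separately and more cheaply, as already sketched in the text ($P$ fits in a $14\times14$ square, Theorem~\ref{thm:fourpoints} applies to four chosen points, and every other point reaches a covering one of them in one hop, for a total of at most $5$).

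I expect the main obstacle to lie not in any single idea but in the bookkeeping of the cross-cell step: specifically, (a) verifying that the connecting edge promised by Theorem~\ref{thm:twosets} under unbounded range survives truncation to range $14\sqrt2$, which is precisely where the calibration of cell side $7$ against $r=14\sqrt2$ is used, including the diagonal-neighbor case where a separating line must still exist; and (b) pinning the two diameter estimates, namely that a connected graph on the four hubs of one cell has diameter at most $3$ so that the two cell-internal trees contribute $3+3$ and no more. Once these are in place, summing $1+7+1$ yields the claimed $9$-spanner, and the connectivity of $\graph$ follows because $\UDG(P)$ is connected.
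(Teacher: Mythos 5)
Your proposal is correct and follows essentially the same route as the paper's own proof: reduce to a per-edge bound, use Lemma~\ref{lemma:neighboring} to show the two relevant full cells coincide or are neighbors, connect within a cell via Theorem~\ref{thm:fourpoints} and across neighboring cells via Theorem~\ref{thm:twosets}, and check that range $14\sqrt{2}$ realizes all needed edges. Your bookkeeping ($1+3+1=5$ and $1+3+1+3+1=9$) matches the paper's counts exactly, and your explicit treatment of diagonal neighbors and of the truncation from unbounded range to $14\sqrt{2}$ only makes explicit what the paper leaves implicit.
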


\begin{proof}
Let $(p,q)$ be an edge of $\UDG(P)$.
We show that $\graph$ contains a path from $p$ to $q$ consisting of at most $9$ edges.
Let $\C_p$ and $\C_q$ be the cells of $\grid$, such that $p \in \C_p$ and $q \in \C_q$.
We distinguish between the three cases that are listed in Lemma~\ref{lemma:neighboring}.

\begin{figure}[htb]
 \centering
 \subfigure[]{
    \centering
      \includegraphics[width=0.25\textwidth]{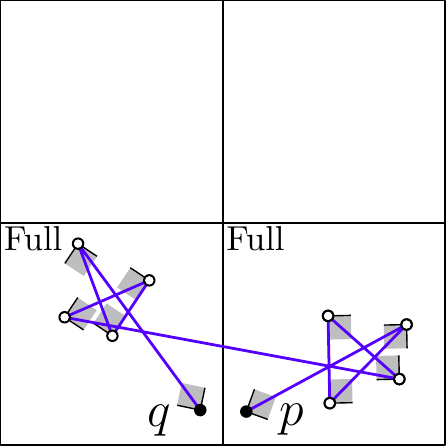}
        \label{fig:hopspanner3}
   }
   \subfigure[]{
     \centering
     \includegraphics[width=0.25\textwidth]{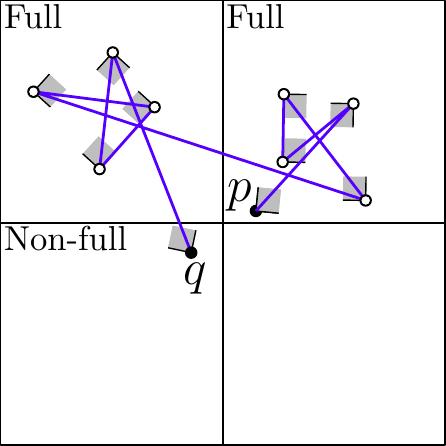}
     \label{fig:hopspanner2}
    }
    \subfigure[]{
     \centering
     \includegraphics[width=0.25\textwidth]{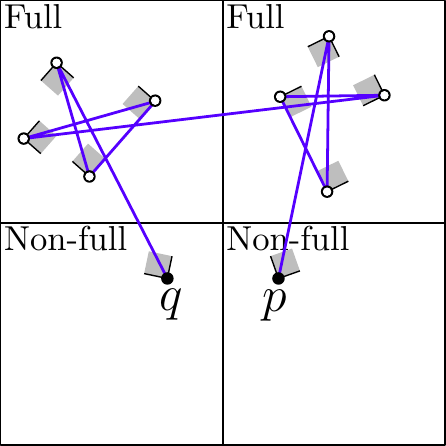}
     \label{fig:hopspanner1}
    }
    \caption{$\graph$ is a $9$-spanner of $\UDG(P)$, w.r.t. hop distance.}
\end{figure}

(i) Consider first the case where $\C_p$ and $\C_q$ are both full.
Then, by Lemma~\ref{lemma:neighboring}, either $\C_p = \C_q$ or $\C_p$ and $\C_q$ are neighbors.
Notice that $p$ is either a hub point of $\C_p$, or it is connected to one by a single edge; and the same holds for $q$ and $\C_q$.
Assume first that $\C_p = \C_q$,
then, there exists a path in $\graph$ from $p$ to $q$ consisting of at most 5 edges.
Indeed, such a path starts at $p$, passes through at most four hub points of $\C_p$,
and ends at $q$.
Assume now that $\C_p$ and $\C_q$ are neighbors.
By Theorem~\ref{thm:twosets} and since the range of each antenna is sufficient to reach any point in any neighboring cell,
there exists an edge in $\graph$ connecting a hub point of $\C_p$ with a hub point of $\C_q$.
Consequently, $\graph$ contains a path from $p$ to $q$ consisting of at most 9 edges.
Indeed, such a path starts at $p$,
passes through at most four hub points of $\C_p$,
continues to a hub point of $\C_q$,
passes through at most four hub points of $\C_q$,
and finally ends at $q$;
see Figure~\ref{fig:hopspanner3} for an illustration.

(ii) Consider now the case where one of the cells, say $\C_p$, is full and the other is non-full.
By construction, $q$ is oriented to a hub point of a neighboring full cell $\clst{q}$ that covers it.
By Lemma~\ref{lemma:neighboring}, either $\C_p = \clst{q}$ or $\C_p$ and $\clst{q}$ are neighbors.
Therefore, as in case (i) above, $\graph$ contains a path from $p$ to $q$ consisting of at most 9 edges;
see Figure~\ref{fig:hopspanner2} for an illustration.

(iii) Finally, consider the case where $\C_p$ and $\C_q$ are both non-full.
By construction, $p$ (resp., $q$) is connected by an edge to a hub point of a full cell $\clst{p}$ (resp., $\clst{q}$) that covers it.
By Lemma~\ref{lemma:neighboring}, either $\clst{p} = \clst{q}$ or $\clst{p}$ and $\clst{q}$ are neighbors.
Therefore, as in case (i) above, $\graph$ contains a path from $p$ to $q$ consisting of at most 9 edges;
see Figure~\ref{fig:hopspanner1} for an illustration.
\end{proof}

\subsection{Reducing the hop ratio to 8}\label{subsec:reducing_hops}

We show how to reduce the hop ratio from 9 to 8, by picking the four hub points (in each full cell) more carefully, and by reorienting the points in non-full cells, so that whenever two such points are ``close'' to each other, they are oriented towards the same full cell.

\begin{figure}[htp]
   \centering
       \includegraphics[width=0.4\textwidth]{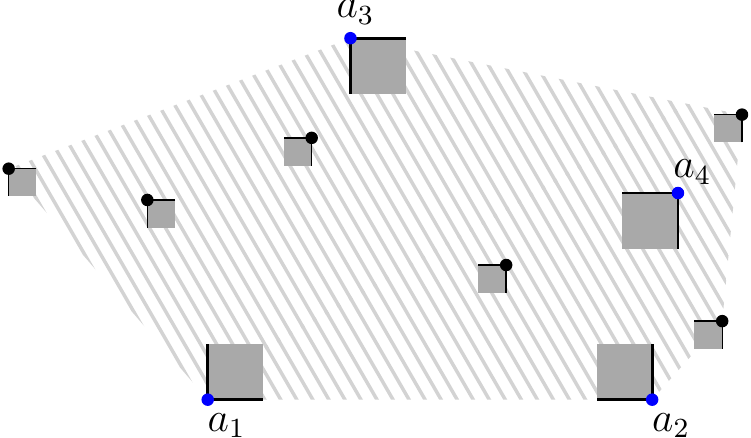}
   \caption{The orientation assignment to $\{a_1, a_2, a_3, a_4\}$ satisfies the conditions of Theorem~\ref{thm:fourpoints},
   and each of the points in $P_\C$ is covered by $a_1$ or $a_2$.}
   \label{fig:four_pts_improved}
\end{figure}

We first show how to carefully pick the four hub points of a full cell; see Figure~\ref{fig:four_pts_improved} for an illustration.
Let $\C$ be a full cell and put $P_\C = P \cap \C$. Let $\segment{a_1}{a_2}$ be the longest edge of $CH(P_\C)$, and assume, w.l.o.g., that $\segment{a_1}{a_2}$ is horizontal and that $CH(P_\C)$ lies above it.
The half-strip with base $\segment{a_1}{a_2}$ and facing upwards
covers at least one point of $P_\C \setminus \{a_1,a_2\}$
(otherwise, $CH(P_\C)$ has an edge longer than $\segment{a_1}{a_2}$). Let $a_3$ be such a point, and let $a_4$ be an arbitrary point in
$P_\C \setminus \{a_1,a_2,a_3\}$. We pick $a_1,a_2,a_3,a_4$ as the hub points of $\C$, where
$a_1$ and $a_2$ are the {\em supporting} hub points of $\C$.
Assume w.l.o.g. that $a_1$ is to the left of $a_2$ and that $a_3$ is to the left of $a_4$.
Assign $a_1$ orientation up-right, $a_2$ orientation up-left, $a_3$ down-right, and $a_4$ down-left.
This orientation assignment satisfies the conditions of Theorem~\ref{thm:fourpoints}.
Moreover, since $CH(P_\C) \subseteq \wedge{a_1} \cup \wedge{a_2}$, we can orient each non-hub point of $\C$
towards one of the supporting hub points of $\C$.

Now, we would like to modify the rule by which we orient the points in non-full cells. For this we need Lemma~\ref{lemma:full_super_neighbor} below.
Let $P_{\nf} \subseteq P$ be the subset of points of $P$ lying in non-full cells, and
consider $\UDG(P_{\nf})$, the subgraph of $\UDG(P)$ induced by $P_{\nf}$.
Let $p$ and $q$ be two points in the same connected component of $\UDG(P_{\nf})$,
and let $\C_p$ and $\C_q$ be the (non-full) cells of $\grid$, such that $p \in \C_p$ and $q \in \C_q$.
As observed above, Proposition~\ref{prop:fullcell} implies that $\clst{p}$ is a neighbor of $\C_p$ and that $\clst{q}$ is a neighbor of $\C_q$.
Lemma~\ref{lemma:full_super_neighbor} below, shows that $\clst{p}$ is also a neighbor of $\C_q$ and that $\clst{q}$ is also a neighbor of $\C_p$.

\begin{lemma}
\label{lemma:full_super_neighbor}
Consider a connected component of $\UDG(P_{\nf})$, and let $P' \subseteq P$ be its corresponding set of points.
Let $p,q$ be two points in $P'$, and let $\C_q$ be the cell of $\grid$ such that $q \in \C_q$.
Then $\clst{p}$ is a neighbor of $\C_q$.
\end{lemma}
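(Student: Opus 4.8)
The plan is to exploit Proposition~\ref{prop:fullcell} twice: once to confine the whole connected component, and once via a rerouting trick to pin down $\clst{p}$. First I would establish the key structural fact that the \emph{entire} component $P'$ lies inside $\C_q$'s block. Indeed, any path in $\UDG(P_{\nf})$ that starts at $q\in\C_q$ visits only non-full cells; by Proposition~\ref{prop:fullcell}, a path beginning in $\C_q$ that exits $\C_q$'s block must pass through a full cell of that block, so a path using only non-full cells can never leave $\C_q$'s block. Hence every point of $P'$ (in particular $p$, and every vertex reachable from $q$ through non-full cells) lies in the block of $\C_q$.

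Next I would analyze the shortest path $\langle p=p_0,p_1,\dots,p_m=x_p\rangle$ in $\UDG(P)$ realizing the hop distance from $p$ to the full cell $\clst{p}$, where $x_p\in\clst{p}$. Since $x_p$ is a closest (in hop distance) full point to $p$, no intermediate vertex $p_i$ with $i<m$ can lie in a full cell; otherwise $p_i$ would be a full point strictly closer to $p$. Thus $p_0,\dots,p_{m-1}\in P_{\nf}$ and are mutually connected, so $p_{m-1}$ lies in the same component $P'$ as $p$. Now I reroute: take a path inside $P'$ from $q$ to $p_{m-1}$ and append the edge $(p_{m-1},x_p)$. This yields a path in $\UDG(P)$ that begins at $q$, whose only full cell is $\clst{p}$ (all other vertices lie in $P'\subseteq$ block of $\C_q$, hence are non-full).

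The decisive step is to apply Proposition~\ref{prop:fullcell} to this rerouted path. Suppose, for contradiction, that $\clst{p}$ is not in $\C_q$'s block (i.e.\ neither equal to nor a neighbor of $\C_q$). Then $x_p$ lies outside $\C_q$'s block, so the path begins at $q\in\C_q$ and exits that block, and by the proposition it must pass through a full cell inside $\C_q$'s block. But the unique full cell the path visits is $\clst{p}$, which by assumption lies outside the block --- a contradiction. Therefore $\clst{p}$ belongs to $\C_q$'s block, and since $\clst{p}$ is full while $\C_q$ is non-full, $\clst{p}\neq\C_q$, so $\clst{p}$ is a neighbor of $\C_q$. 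The step I expect to require the most care is the rerouting and the bookkeeping of which cells the concatenated path visits: I must argue cleanly that the shortest path to $x_p$ has all but its last vertex in non-full cells (so that $p_{m-1}\in P'$), and that splicing in a path from $q$ keeps every vertex except $x_p$ inside $\C_q$'s block, so that the proposition can be invoked with $\clst{p}$ as the sole candidate full cell.
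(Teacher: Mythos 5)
Your proof is correct and follows essentially the same route as the paper's: both argue by contradiction, concatenate a path through the non-full component from $q$ with (the tail of) the shortest path from $p$ to $x_p\in\clst{p}$, and apply Proposition~\ref{prop:fullcell} together with the minimality in the definition of $\clst{p}$. Your extra preliminary step (confining all of $P'$ to $\C_q$'s block) and the rerouting through $p_{m-1}$ are harmless reorganizations --- what matters is only that every vertex before $x_p$ lies in a non-full cell --- so the argument goes through exactly as in the paper.
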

\begin{proof}
Assume to the contrary that $\clst{p}$ is not a neighbor of $\C_q$.
Let $x_p \in \clst{p}$ be a point that determines the hop distance between $p$ and $\clst{p}$.
Consider the path $\Pi$ in $\UDG(P)$ that is obtained by concatenating the subpaths $\Pi_1$ and $\Pi_2$,
where $\Pi_1$ is a path in $\UDG(P_{\nf})$ from $q$ to $p$,
and $\Pi_2$ is the shortest path in $\UDG(P)$ from $p$ to $x_p$.
By our assumption, $\Pi$ exits $\C_q$'s block.
Therefore, by Proposition~\ref{prop:fullcell}, $\Pi$ must pass through a full cell in $\C_q$'s block before exiting the block.
By definition, $\Pi_1$ does not pass through full cells, hence $\Pi_2$ must pass through a full cell $\C$ in $\C_q$'s block.
But, $\Pi_2$ visits $\C$ before it visits $\clst{p}$, thus $\C$ is a full cell closer to $p$ than is $\clst{p}$
--- a contradiction.
\end{proof}

\old{
\begin{corollary}
\label{col:super_neighbor_full_cell}
Let $P' \subseteq P$ corresponds to a connected component of $\UDG(\nf)$,
and let $p \in P'$ be an arbitrary point.
Then, for any $q \in P'$, the cell of $\grid$ containing $q$ is a neighbor of $\clst{p}$.
\end{corollary}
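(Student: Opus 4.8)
The plan is to argue by contradiction, combining Proposition~\ref{prop:fullcell} (applied to the block of $\C_q$) with the minimality built into the definition of $\clst{p}$. Suppose $\clst{p}$ is not a neighbor of $\C_q$. Since $q \in P_{\nf}$, the cell $\C_q$ is non-full while $\clst{p}$ is full, so in particular $\clst{p} \neq \C_q$; hence $\clst{p}$ lies entirely outside the $3 \times 3$ block centered at $\C_q$.

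First I would assemble a single path in $\UDG(P)$ from which the contradiction will spring. Because $p$ and $q$ lie in the same connected component of $\UDG(P_{\nf})$, there is a path $\Pi_1$ from $q$ to $p$ whose vertices all lie in non-full cells. Let $x_p \in \clst{p}$ realize the hop distance from $p$ to its nearest full cell, and let $\Pi_2$ be a shortest path in $\UDG(P)$ from $p$ to $x_p$. Concatenating gives a path $\Pi$ that starts at $q \in \C_q$ and ends at $x_p \in \clst{p}$; since $\clst{p}$ is outside $\C_q$'s block, $\Pi$ exits the block. By Proposition~\ref{prop:fullcell}, $\Pi$ must then pass through some full cell $\C \neq \C_q$ lying inside $\C_q$'s block.

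The next step is to pin down where $\C$ can occur along $\Pi$. By the definition of $P_{\nf}$, the subpath $\Pi_1$ meets no full cell, so $\C$ must lie on $\Pi_2$. Moreover $\C$ is inside $\C_q$'s block whereas $\clst{p}$ is not, so $\C \neq \clst{p}$, and therefore $\C$ is visited at a vertex of $\Pi_2$ strictly before the endpoint $x_p$. This is where the argument closes: since $\Pi_2$ is a shortest path from $p$ to $x_p$, every vertex of $\Pi_2$ other than $x_p$ is strictly closer to $p$ in hop distance than $x_p$ is. A vertex of $\Pi_2$ lying in the full cell $\C$ is thus a point of a full cell strictly closer to $p$ than $x_p$, contradicting the choice of $\clst{p}$ (equivalently, of $x_p$) as realizing the minimum hop distance from $p$ to a full cell.

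I expect the crux to be precisely this last observation --- that a shortest path to the nearest full cell cannot meet any full cell before its final vertex --- which forces the full cell guaranteed by Proposition~\ref{prop:fullcell} to be distinct from $\clst{p}$ and hence to violate minimality. The one modeling choice that makes everything fit together cleanly is to split $\Pi$ so that all full cells are confined to $\Pi_2$; this is exactly why $\Pi_1$ is taken inside $\UDG(P_{\nf})$ rather than in $\UDG(P)$.
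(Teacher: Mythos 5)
Your proof is correct and follows essentially the same route as the paper's proof of Lemma~\ref{lemma:full_super_neighbor} (of which this corollary is a restatement): contradiction, concatenation of a path $\Pi_1$ inside $\UDG(P_{\nf})$ from $q$ to $p$ with a shortest path $\Pi_2$ from $p$ to $x_p \in \clst{p}$, an application of Proposition~\ref{prop:fullcell} to $\C_q$'s block, and the observation that the resulting full cell must lie on $\Pi_2$ strictly before $x_p$, violating the minimality of $\clst{p}$. Your explicit justification that every vertex of a shortest path preceding $x_p$ is \emph{strictly} closer to $p$ makes precise a step the paper leaves implicit, but the argument is the same.
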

}

Given Lemma~\ref{lemma:full_super_neighbor}, we may reorient the points in $P_{\nf}$ as follows.
For each connected component of $\UDG(P_{\nf})$ and its corresponding set of points $P'$, pick any point $p \in P'$ and orient it,
as before, towards a hub point of $\clst{p}$ that covers it. Now, for each point $q \in P'$, $q \ne p$, orient $q$ towards a hub point
of $\clst{p}$ that covers it.

Let $\graph$ be the resulting SCG.
We are now able to reduce the number of hops in each of the paths described in the proof of Lemma~\ref{lemma:hop_distance} above.
Let $(p,q)$ be an edge of $\UDG(P)$, and let $\C_p$ and $\C_q$ be the cells of $\grid$, such that $p \in \C_p$ and $q \in \C_q$.
(i)~Consider first the case where $\C_p$ and $\C_q$ are both full.
Then, there exists a path in $\graph$ from $p$ to $q$ consisting of at most 7 edges.
Indeed, both $p$ and $q$ are connected to a supporting hub point by a single edge,
therefore, in the worst case, a path from $p$ to $q$ starts at $p$, passes through
at most three hub points of $\C_p$,
continues to a hub point of $\C_q$,
passes through at most three hub points of $\C_q$,
and finally ends at $q$.

(ii)~Consider now the case where $\C_p$ is full and $\C_q$ is non-full.
Let $r$ be the point in the connected component of $\UDG(P_{\nf})$ to which $q$ belongs, such that each point in $\UDG(P_{\nf})$ was oriented towards a hub point of $\clst{r}$ that covers it. We claim that $\clst{r}$ is a neighbor of $\C_p$, even though it is possible that $\clst{r} \ne \clst{q}$.
Indeed, assuming that $\clst{r}$ is not in $\C_p$'s block leads to a contradiction, as above. (Define the path in $\UDG(P)$ that is obtained by concatenating the edge $(p,q)$, the path in $\UDG(P_{\nf})$ from $q$ to $r$, and the shortest path in $\UDG(P)$ from $r$ to a point in $\clst{r}$. Then, by Proposition~\ref{prop:fullcell}, there is a full cell closer to $r$ than is $\clst{r}$.)
However, the hub point of $\clst{r}$ towards which $q$ is oriented, is not necessarily a supporting hub point,
thus the path from $p$ to $q$ consists of at most 8 edges.
Such a path starts at $p$,
passes through at most three hub points of $\C_p$,
continues to a hub point of $\clst{r}$, passes through at most four hub points of $\clst{r}$,
and finally ends at $q$.

(iii)~Finally, consider the case where $\C_p$ and $\C_q$ are both non-full.
By construction, $p$ and $q$ are oriented to the same full cell $\C$,
thus the path from $p$ to $q$ starts at $p$,
passes through at most four hub points of $\C$, and finally ends at $q$;
such a path consists of at most 5 edges.

Theorem~\ref{thm:replacing_summary} summarizes the main result of this section.
\begin{theorem}
\label{thm:replacing_summary}
Let $P$ be a set of points, where each point represents a transceiver equipped with an omni-directional antenna of range 1,
and assume that $\UDG(P)$ is connected.
Then, one can replace the omni-directional antennas with directional antennas of angle $\pi/2$ and range $14 \sqrt{2}$,
such that the induced SCG is (i) connected, and (ii) a 8-spanner of $\UDG(P)$, w.r.t. hop distance.
\end{theorem}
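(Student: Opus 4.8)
The plan is to assemble the construction and lemmas developed throughout Section~\ref{sec:replacing_omni}, together with the improved orientation scheme of Section~\ref{subsec:reducing_hops}, into a single argument. The range is fixed at $r = 14\sqrt{2}$, which is exactly the diagonal of a $2\times 2$ block of grid cells of side $7$, so that every antenna can reach any point lying in its own cell or in any of the eight neighboring cells, provided that point falls inside the antenna's wedge. With $r$ fixed, both claims --- (i) connectivity and (ii) the hop-spanner bound --- reduce to exhibiting, for each edge $(p,q)$ of $\UDG(P)$, a short path between $p$ and $q$ in the resulting SCG $\graph$; connectivity then follows immediately from the connectivity of $\UDG(P)$.

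First I would dispose of the degenerate case in which no cell of $\grid$ is full: here $P$ lies in a $14\times 14$ square, so a single application of Theorem~\ref{thm:fourpoints} to four chosen points of $P$ (with every other point oriented toward a covering one of the four) yields an SCG of diameter at most $5$, and we are done. So I would assume at least one full cell exists and orient the points exactly as prescribed: in each full cell $\C$, choose the hub points by taking the endpoints $a_1,a_2$ of a longest edge of $CH(P_\C)$ as supporting hub points (so that $CH(P_\C)\subseteq \wedge{a_1}\cup\wedge{a_2}$) plus two further points, orienting the quadruple per Theorem~\ref{thm:fourpoints}; orient every non-hub point of $\C$ toward a supporting hub point; and, for each connected component of $\UDG(P_{\nf})$, orient all of its points toward hub points of a \emph{single} full cell $\clst{p}$ (for a representative $p$), which is legitimate by Lemma~\ref{lemma:full_super_neighbor}.

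The heart of the proof is then the per-edge path analysis, split along the three cases of Lemma~\ref{lemma:neighboring}. When $\C_p$ and $\C_q$ are both full, Lemma~\ref{lemma:neighboring}(i) makes them equal or neighbors; Theorem~\ref{thm:twosets} then guarantees an edge of $\graph$ between a hub point of $\C_p$ and a hub point of $\C_q$ (their wedges being within reach by the choice of $r$), and routing through supporting hub points bounds the path at $7$ hops. When both cells are non-full, the component-reorientation funnels $p$ and $q$ to hub points of the same full cell, giving at most $5$ hops. I expect the mixed case --- $\C_p$ full, $\C_q$ non-full --- to be the main obstacle, and the crux is to show that the full cell $\clst{r}$ toward which the entire component of $q$ is oriented is a neighbor of $\C_p$, even when $\clst{r}\neq\clst{q}$; this I would obtain by concatenating the edge $(p,q)$, a path in $\UDG(P_{\nf})$ from $q$ to $r$, and the shortest path from $r$ into $\clst{r}$, and invoking Proposition~\ref{prop:fullcell} to derive a contradiction with the choice of $\clst{r}$ as a closest full cell. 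Since the hub point reached in $\clst{r}$ need not be a supporting one, this case costs at most $8$ hops, which is the worst of the three and hence fixes the spanning ratio at $8$.
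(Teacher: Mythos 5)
Your proposal is correct and follows the paper's own argument essentially step for step: the same grid construction with full/non-full cells, the degenerate no-full-cell case, the supporting hub points chosen from a longest hull edge, the component-wise reorientation of points in non-full cells justified by Lemma~\ref{lemma:full_super_neighbor}, and the same three-case path analysis with hop counts $7$, $8$, and $5$, the mixed case (with $\clst{r}$ shown to be a neighbor of $\C_p$ via Proposition~\ref{prop:fullcell}) being the bottleneck that fixes the ratio at $8$. No gaps worth noting.
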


\old{
\begin{figure}[htb]
 \centering
 \subfigure[]{
    \centering
      \includegraphics[width=0.25\textwidth]{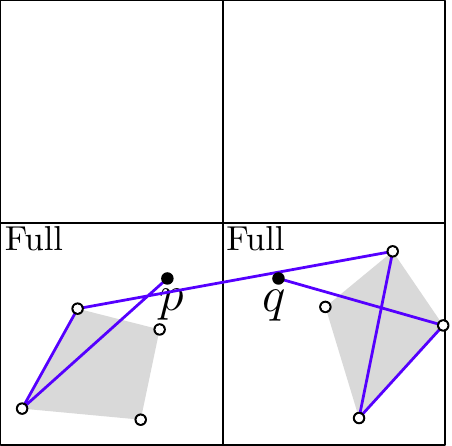}
        \label{fig:hopspanner6}
   }
   \subfigure[]{
     \centering
     \includegraphics[width=0.25\textwidth]{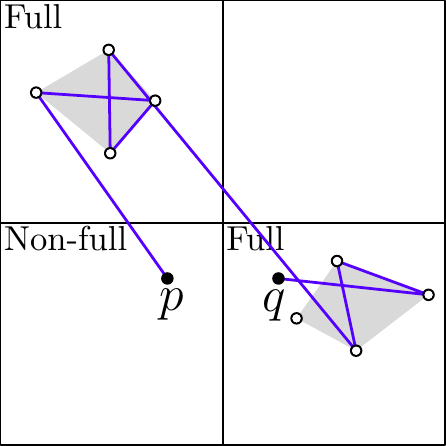}
     \label{fig:hopspanner5}
    }
    \subfigure[]{
     \centering
     \includegraphics[width=0.25\textwidth]{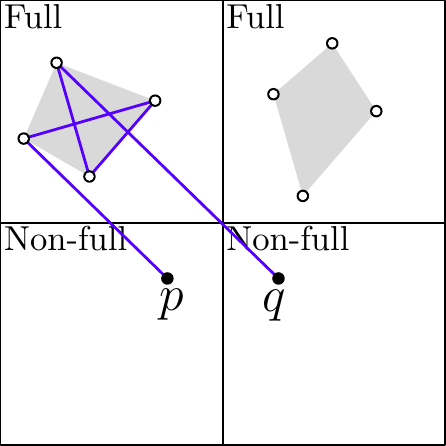}
     \label{fig:hopspanner4}
    }
    \caption{$8$-hop distance}
\end{figure}
}

%%%%%%%%%%%%%%%%%%%%%%%%%%%%%%%%%%%%%%%%%%%%%%%%%%%%%%%%%%%%%%%%%%%%%%%%%%%%%%%%%%%%%%%%%%%%%%%%%%%%%%%%%%%%%%%%%%%%%%%%%%%%%%%%%%%%%

\section{Orientation and power assignment}\label{sec:power_assignment}

In the construction described in Section~\ref{sec:replacing_omni}, each antenna is assigned range $r=14\sqrt{2}$.
In this section we consider the problem of assigning to each antenna $p$ an individual range, denoted $r(p)$.
Let $P$ be a set of $n$ points in the plane, representing the locations of directional antennas of angle $\pi/2$.
In the {\em orientation and power assignment} problem one needs to assign to each of the antennas an orientation and a range, such that
(i) the resulting SCG is connected, and (ii) $\sum_{p \in P}{r(p)^\beta}$ is minimized, where $\beta \ge 1$ is the distance-power gradient (typically, $2 \leq \beta \leq 5$).

Assume, for convenience, that $n = 8m$, for some integer $m$.
Let $\Pi$ be a simple closed polygonal path whose vertices are the points in $P$.
We partition $P$ into subsets (called {\em sections}) of size 8, by traversing $\Pi$ from an arbitrary point $p \in P$. %in an arbitrary direction.
That is, each of the sections consists of eight consecutive points along $\Pi$.
For each section $S$, we partition its points, according to their $x$-coordinate, into a left subsets, $S^l$, consisting of the four leftmost points of $S$, and a right subset, $S^r$, consisting of the four rightmost points of $S$.
(Notice that the points in each quadruplet are not necessarily consecutive along $\Pi$.)
Next, we orient the antennas corresponding to the points in $S^l$ (resp., $S^r$) according to Theorem~\ref{thm:fourpoints}.
By definition (of $S^l,S^r$), there exists a vertical line, $l_S$, that separates between $S^l$ and $S^r$.
Therefore, by Theorem~\ref{thm:twosets}, the SCG induced by $S^l \cup S^r$ is connected.

Moreover, consider two `adjacent' sections $A$ and $B$, and consider their corresponding lines $l_A$ and $l_B$.
Assume, e.g., that $l_A$ lies to the left of $l_B$; see Figure~\ref{fig:power_assignment}. Then, $l_A$ also separates between $A^l$ and $B^r$,
and therefore, assuming unbounded range, the SCG induced by $A \cup B$ is connected.
We conclude that any two adjacent sections are connected, and one can limit the range of the points in a section as long as they reach the points in the preceding and succeeding sections (and each other).

\begin{figure}[htp]
   \centering
       \includegraphics[width=0.45\textwidth]{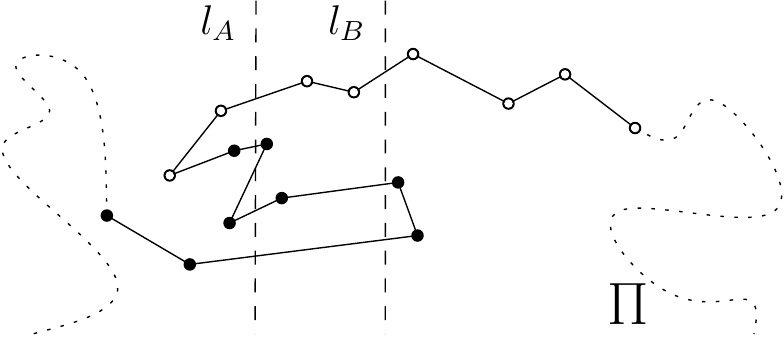}
   \caption{The points of $A$ are shown as filled circles and the points of $B$ as empty circles.}
   \label{fig:power_assignment}
\end{figure}

We take $\Pi$ to be an $O(1)$-approximation of an optimal solution to the traveling salesperson problem in the complete graph
$G_\beta(P)$ over $P$, in which the weight of an edge $e=(p,q)$ is $|e|^\beta$; $\Pi$ can be computed in polynomial time, see, e.g., \cite{BC00,BNSWW10,FLNL08,A01}.
For each section $S_i$, $1 \le i \le m$, and for each point $p \in S_i$, we set $r(p)$ to be the maximum distance between $p$ and a point in $S_{i-1} \cup S_i \cup S_{i+1}$,
where $S_{i-1}, S_{i+1}$ are the preceding and succeeding sections of $S_i$, respectively ($S_0=S_m$ and $S_{m+1}=S_1$).
By construction, the induced SCG is connected.

Below we analyze the quality of the obtained power assignment, denoted $PA$, w.r.t. to the optimal power assignment using omni-directional antennas, denoted $OPT$. Recall that the cost of $PA$ (denoted $Cost(PA)$) is $\sum_{p \in P} r(p)^\beta$.
Clearly, $Cost(OPT)$ is not greater than the cost of an optimal power assignment using directional antennas of angle $\pi/2$.

\begin{theorem}
$Cost(PA) = O(1) \cdot Cost(OPT)$.
\end{theorem}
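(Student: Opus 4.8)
The plan is to establish the chain of inequalities
\[
Cost(PA) \;\le\; c_1\, W(\Pi) \;\le\; c_2\, \mathrm{TSP}_\beta \;\le\; c_3\, Cost(OPT),
\]
where $W(\Pi)=\sum_{e\in\Pi}|e|^\beta$ is the $\beta$-power weight of the tour $\Pi$, and $\mathrm{TSP}_\beta$ (resp.\ $\mathrm{MST}_\beta$) denotes the minimum $\beta$-power weight of a Hamiltonian tour (resp.\ spanning tree) of $G_\beta(P)$. The constants $c_1,c_2,c_3$ will depend only on $\beta$ (itself a fixed constant, $\beta\ge 1$), not on $n$, so the composition is an $O(1)$ factor.

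For the first inequality I would bound each assigned range against a short arc of $\Pi$. If $p\in S_i$, then $r(p)$ is realized by some point $q\in S_{i-1}\cup S_i\cup S_{i+1}$, and these three consecutive sections occupy at most $24$ consecutive positions along $\Pi$; hence the tour-arc joining $p$ to $q$ consists of at most a constant number $k\le 23$ of edges, and by the triangle inequality $|pq|$ is at most the sum of their lengths. Since $x\mapsto x^\beta$ is convex for $\beta\ge 1$, the power-mean inequality $\bigl(\sum_{j=1}^{k}x_j\bigr)^\beta\le k^{\beta-1}\sum_{j=1}^{k}x_j^\beta$ gives $r(p)^\beta\le 23^{\beta-1}\sum_e|e|^\beta$, the sum ranging over the edges of that arc. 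Summing over all $p\in P$ and observing that every edge of $\Pi$ lies in only $O(1)$ such arcs (each arc spans $O(1)$ consecutive positions, so a fixed edge is charged by $O(1)$ points) yields $Cost(PA)\le c_1 W(\Pi)$.

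The second inequality is immediate: by construction $\Pi$ is an $O(1)$-approximation of an optimal tour of $G_\beta(P)$, so $W(\Pi)\le c_2\,\mathrm{TSP}_\beta$. For the third inequality I would factor it as $\mathrm{TSP}_\beta\le c\cdot\mathrm{MST}_\beta$ together with $\mathrm{MST}_\beta\le Cost(OPT)$. The latter is an easy charging argument: the SCG induced by $OPT$ is connected, hence contains a spanning tree $T$; rooting $T$ at an arbitrary vertex and charging each edge $(p,q)$ to its lower endpoint $v$, the fact that the edge is present in the \emph{symmetric} graph forces $|pq|\le\min(r(p),r(q))\le r(v)$, so, as each vertex is a child at most once,
\[
\mathrm{MST}_\beta \le w_\beta(T)=\sum_{(p,q)\in T}|pq|^\beta \le \sum_{p\in P} r(p)^\beta = Cost(OPT).
\]

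The main obstacle is the remaining inequality $\mathrm{TSP}_\beta\le c\cdot\mathrm{MST}_\beta$. Unlike the Euclidean ($\beta=1$) case, one cannot simply shortcut a double-tree walk, since replacing a detour of $k$ edges by a single chord can inflate the $\beta$-power weight by a factor $k^{\beta-1}$, which is not bounded in general. I would instead rely on the geometry of planar point sets: a Euclidean minimum spanning tree in the plane has bounded maximum degree (at most $6$), and for points in the plane the optimal power-weighted tour is within a constant factor (depending only on $\beta$) of the power-weighted minimum spanning tree. Establishing this comparison between the power-weighted TSP and MST functionals --- via a careful local traversal of the bounded-degree tree that only shortcuts over geometrically nearby vertices --- while tracking that every hidden constant depends on $\beta$ alone, is the crux; the preceding steps are then routine. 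Combining the three inequalities gives $Cost(PA)=O(1)\cdot Cost(OPT)$, as required.
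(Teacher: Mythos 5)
Your proposal follows the same chain of inequalities as the paper's own proof: bound $Cost(PA)$ by the power-weight of the tour $\Pi$ using the fact that each range $r(p)$ is realized within a window of three consecutive sections, then use that $\Pi$ is an $O(1)$-approximation of the power-weighted TSP, and finally compare the TSP to the power-weighted MST and the MST to $Cost(OPT)$. Your first step (arc of at most $23$ edges, the power-mean inequality, and $O(1)$ charges per tour edge) is a correct variant of the paper's cruder bound of $8\cdot 15^{\beta}\cdot 3\cdot\sum_{e\in\Pi}|e|^{\beta}$, and your rooted-tree charging argument for $\mathrm{MST}_\beta \le Cost(OPT)$ is exactly the standard argument that the paper outsources to Kirousis et al.

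The one step you leave open --- $\mathrm{TSP}_\beta \le c\cdot \mathrm{MST}_\beta$ --- is precisely the step the paper dispatches with ``it is well known and easy to prove,'' so relative to the paper this is not a wrong turn, only an unredeemed citation. Your concern is legitimate (naive double-and-shortcut does fail for $\beta>1$, since a shortcut over $k$ edges can cost a factor $k^{\beta-1}$), and your sketched remedy does close the gap: take the Euclidean MST, which has maximum degree at most $6$; by Sekanina's theorem the cube of any connected graph (in particular of this tree) contains a Hamiltonian cycle, i.e., there is a tour in which consecutive vertices are at tree-distance at most $3$. Each tour edge then has power weight at most $3^{\beta-1}$ times the total power weight of the at most $3$ tree edges on its connecting path, and the degree bound implies each tree edge lies on $O(1)$ such paths, so $\mathrm{TSP}_\beta \le O\bigl(3^{\beta}\bigr)\cdot \mathrm{MST}_\beta$, with the constant depending only on $\beta$. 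With that lemma filled in, your proof is complete and is essentially the paper's proof, rendered in more detail at the two ends of the chain and one lemma short in the middle.
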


\begin{proof}
Let $S_i$ be the $i$'th section. Then,
\begin{align*}
Cost(S_i) &= \sum_{p \in S_i} r(p)^\beta \le 8 \cdot \max \{ |pq|^\beta \, : \, p \in S_i \mbox{ and } q \in S_{i-1} \cup S_i \cup S_{i+1} \} \\
& \le 8 \cdot 15^\beta \cdot \max \{ |pq|^\beta \, : \, p, q \in S_{i-1} \cup S_i \cup S_{i+1} \mbox{ and are consecutive in } \Pi \} \, .
\end{align*}
Let $MST(G_\beta(P))$ denote a minimum spanning tree of the graph $G_\beta(P)$ (defined above), and
let $TSP(G_\beta(P))$ denote a minimum tour of $G_\beta(P)$. It is well known and easy to prove that
$Cost(TSP(G_\beta(P))) \le O(1) \cdot Cost(MST(G_\beta(P)))$, where $Cost()$ is the sum of the weights of the edges in the appropriate structure.
Kirousis et al.~\cite{KKKP00} argued that $Cost(MST(G_\beta(P))) \le Cost(OPT)$. It follows that $Cost(TSP(G_\beta(P))) \le O(1) \cdot Cost(OPT)$.
Thus,
\begin{align*}
Cost(PA) &= \sum_{i=1,m}{Cost(S_i)} \le
\sum_{i=1,m} 8 \cdot 15^\beta \cdot \max \{ |pq|^\beta \, : \, p, q \in S_{i-1} \cup S_i \cup S_{i+1} \mbox{ and are consecutive in } \Pi \} \\
 & \leq 8 \cdot 15^\beta \cdot 3 \cdot \sum_{e \in \Pi} |e|^\beta =  O(1) \cdot Cost(\Pi) = O(1) \cdot Cost(TSP(G_\beta(P))) = O(1) \cdot Cost(OPT) \, .
\end{align*}

\end{proof}

\paragraph{Remark.}
Our goal in this section was to establish that there exists an $O(1)$-approximation algorithm for the orientation and power assignment problem. We did not try to optimize the constant.

%\section{Concluding Remarks}

%%%%%%%%%%%%%%%%%%%%%%%%%%%%%%%%%%%%%%%%%%%%%%%%%%%%%%%%%%%%%%%%%%%%%%%%%%%%%%%%%%%%%%%%%%%%%%%%%%%%%%%%%%%%%%%%%%%%%%%%%%%%%%%%%%%%%

\end{document}